\newcommand{\alg}[2]{\begin{algorithm}[ht] \caption{#1} \label{alg:#2}
		\begin{algorithmic}[1]}
		\newcommand{\ealg}{\end{algorithmic} \end{algorithm}}
\newcommand{\aref}[1]{\textbf{Algorithm}~\ref{alg:#1}}
\renewcommand{\b}[1]{\left[#1\right]}
\theoremstyle{definition}
\newtheorem{claim}{Claim}
\newcommand\numberthis{\addtocounter{equation}{1}\tag{\theequation}}
\newcommand{\p}[1]{\left(#1\right)}
\newcommand{\argmin}{\operatornamewithlimits{argmin}}
\newcommand{\eps}{\varepsilon} 
\newcommand{\ra}{\rightarrow} 
\newcommand{\omt}[1]{ } 
\newcommand{\xhdr}[1]{\paragraph*{\bf {#1}}}
\newcommand{\prevproof}[3]{
	{\vskip 0.1in \noindent {\bf Proof of {#1}~\ref{#2}.} {#3} \rule{2mm}{2mm}
		\vskip \belowdisplayskip}
}
\newcommand{\camera}[1]{ } 
\newcommand{\full}[1]{ #1} 
\begin{document}

\title{Planning with Multiple Biases} 
\author{Jon Kleinberg}
	\affiliation{
		\institution{Cornell University}
	}
	\author{Sigal Oren}
	\affiliation{
		\institution{Ben-Gurion University of the Negev}
	}
	\author{Manish Raghavan}
	\affiliation{
	\institution{Cornell University}
   }

\thanks{
	A full version is available from \url{ https://arxiv.org}. Supported in part by a Simons Investigator Award, an ARO MURI grant, a Google Research Grant, and a Facebook Faculty Research Grant.
}
\begin{abstract}
Recent work has considered theoretical models for the behavior of agents with specific behavioral biases: rather than making decisions that optimize a given payoff function, the agent behaves inefficiently because its decisions suffer from an underlying bias. These approaches have generally considered an agent who experiences a single behavioral bias, studying the effect of this bias on the outcome. 

In general, however, decision-making can and will be affected by multiple biases
operating at the same time. How do multiple biases interact to produce the
overall outcome? Here we consider decisions in the presence of a pair of biases
exhibiting an intuitively natural interaction: present bias -- the tendency to
value costs incurred in the present too highly -- and sunk-cost bias -- the
tendency to incorporate costs experienced in the past into one's plans for the
future. 

We propose a theoretical model for planning with this pair of biases, and we
show how certain natural behavioral phenomena can arise in our model only when
agents exhibit both biases. As part of our model we differentiate between agents
that are aware of their biases (sophisticated) and agents that are unaware of
them (naive). Interestingly, we show that the interaction between the two biases
is quite complex: in some cases, they mitigate each other's effects while in other cases they might amplify each other. We obtain a number of further results as well, including the fact that the planning problem in our model for an agent experiencing and aware of both biases is computationally hard in general, though tractable under more relaxed assumptions. 
\end{abstract}

\maketitle

\section{Introduction}
A rich genre of work at the interface of economics and psychology
has studied the ways in which behavioral and cognitive biases 
can lead people to make consistently sub-optimal decisions
\cite{della-vigna-psych-econ,kahneman-thinking, ariely2008predictably,thaler2015misbehaving}.
Research in this area has provided a useful organization of
these types of biases, including broad categories such as 
treating losses and gains asymmetrically
\cite{kahneman-prospect-theory},
treating the present inconsistently relative to the future
\cite{frederick-time-inconsist-surv},
and systematically mis-estimating probabilities
\cite{rabin-law-small-numbers,tversky-law-small-numbers}.
Drawing on these results, 
a recent line of research has developed theoretical models of planning
by biased agents, seeking to bound the gap between the quality
of the plans produced by these biased agents and the quality of
optimal plans 
\cite{albers-time-inconsist,gravin-variable-present-bias,ko-ec14-full,kor-ec16,tang-time-inconsist}.

These analyses have generally considered a single bias at a time,
which serves as a way to decompose a complex pattern of behavior 
into a set of conceptually distinct parts.
But it is natural to ask what phenomena might emerge if we were to
build models of multiple biases acting at once.
Would they reinforce each other, or partially ``cancel each other out,''
or would it be situationally dependent?

In this paper we investigate the prospect of analyzing multiple
biases simultaneously, using a theoretical model as our underlying approach.
We focus on two well-studied behavioral biases that fit
naturally together: {\em present bias} --- the tendency to
value costs and benefits incurred in the present too highly relative to 
future costs and benefits 
\cite{akerlof-procrastination,pollak-time-inconsist,strotz-time-inconsist} --- 
and {\em sunk-cost bias} ---
the tendency to incorporate costs incurred in the past into one's
plans for the future, even when these past costs are no longer
relevant to optimal planning
\cite{thaler-positive-theory-choice,arkes1985psychology,thaler-mental-accounting}.
Sunk cost bias is a fundamental bias in planning, studied in various
  disciplines under different names. For example, it appears in the organizational behavior
  literature as
  ``Escalation of Commitment'' \cite{staw1976knee}, and it is known as the ``Concorde
  Fallacy'' \cite{dawkin1976selfish, weatherhead1979savannah} in behavioral
  ecology, named after the
famous supersonic airplane whose development was continued long after it was
clear that it had no economic justification.

\xhdr{Present Bias and Sunk-Cost Bias}
Present bias and sunk-cost bias on their own are qualitatively quite different,
though each operates on perceptions of costs and benefits over time, and
each is easily recognizable at an intuitive level.
A canonical example of present bias (familiar from many people's
experience) is the scenario in which an individual buys a membership to a gym,
but then actually goes to the gym very few times \cite{dellavigna2006paying}.
Viewed at the moment when the gym membership was purchased, the long-range
health benefits of regular exercise seemed to outweigh the cost in effort 
required to go to the gym regularly; but when the time comes to actually
go to the gym, the cost of the effort seems larger than it did previously, 
even relative to the other costs and benefits under consideration.
This leads to sub-optimal decision-making: either it would have been
preferable to buy the membership and then regularly go to the gym,
or to not buy the membership, but it can't be optimal to buy a membership
and then not use it.

A canonical example of sunk-cost bias (also familiar from everyday
experience) lies in the contrast between the following two scenarios
\cite{thaler-positive-theory-choice,thaler-mental-accounting}:
\begin{itemize}
\item[(i)] You have bought an expensive and
non-refundable ticket to a concert or sporting event that you
are very interested in attending, 
but on the day of the event, 
a major snowstorm makes travel dangerous.  Should you go to it anyway?
\item[(ii)] You were given a free ticket to a concert
or sporting event that you are very interested in attending, 
but on the day of the event, 
a major snowstorm makes travel dangerous.  Should you go to it anyway?
\end{itemize}
In examples of these and similar situations, many people view the
two scenarios differently --- they would risk the dangerous travel
conditions in scenario (i) so as not to ``throw away the cost of the ticket,''
while they'd conclude in scenario (ii) that it's not worth the risk
just to make it to the free event.
Yet if we think of the two scenarios strictly as an optimization of
costs and benefits, they are effectively equivalent: 
since the cost of the ticket is unrecoverable in scenario (i),
in both cases the question is whether the enjoyment of attending
the event (given that you are already in possession of the ticket)
outweighs the costs associated with traveling under risky conditions.
The fact that the two scenarios feel different at an intuitive level
suggests some of the deep
ways in which people take into account {\em sunk costs} --- costs incurred
in the past that can no longer be recovered --- and use these sunk costs
in their decision-making even when they are formally irrelevant to
the optimization aspects of the planning problem ahead.

\xhdr{Interactions of Present Bias and Sunk-Cost Bias}
Although present bias and sunk-cost bias involve different types of
reasoning, they both connect costs and benefits incurred at 
different stages of a planning problem to decisions about future behavior.
As such, one could ask about the behavior of an agent
in such a planning problem if they were experiencing both biases.
Do we learn something new by considering the two biases together?

We argue here that modeling the interaction of present bias and
sunk-cost bias in planning
leads to an interesting and natural set of phenomena that don't
arise when we model either of the two biases individually.
To get some intuition for what we learn by combining them, let's first
return to a synthesis of the two scenarios discussed above.
In particular, consider the reasoning (again familiar from
everyday life) of a person who decides they're
going to buy a gym membership so that when the time comes to go to the
gym, their desire not to waste the money spent on the membership will help
motivate them to go regularly.
Although the sentiment is expressed in a pithy format, it is 
intrinsically based on an interaction 
among multiple ingredients:
first, the person suffers from present bias, which will make it harder
to attend the gym when the time comes; 
second, they exhibit sunk cost bias so once they buy a gym membership
  they will be more inclined to visit the gym to avoid wasting the money they
  already spent;
and third, they are {\em sophisticated} in that they realize they
will experience these biases in the future, so they plan to use their sunk-cost bias associated with
prepaying the gym membership as a commitment
device to overcome their present bias when it arises.

\subsection{Planning with Multiple Biases: A Basic Model}

We now describe a simple theoretical model in which we can express
this type of planning, and discuss the basic framework for reasoning
within the model.
The model has the following components, building on a graph-theoretic
formalism from our prior work on present bias 
\cite{ko-ec14-full,kor-ec16}.
First, the planning problem is represented by a directed graph $G$
with non-negative costs on its edges.
An agent starts at a node $s$ in $G$ 
with the goal of reaching a node $t$ in $G$.
There is a reward $R$ at the node $t$.
The agent's payoff if it reaches $t$ is equal to the reward $R$ minus the 
sum of the costs on all the edges it traverses.
In case the agent traverses some of the edges but doesn't
	reach $t$ its negative payoff is simply the total cost of all edges traversed.
(The agent achieves a payoff of 0 if it never starts traversing
the graph.)

Figure \ref{fig:gym-example} shows a small instance of this type of
planning problem.  The optimal plan would be to traverse the
upper path through $v$, achieving a payoff of $R - 1 - 12 = 6$.
Note that if we set the reward $R$ to be 10 instead of 19, then
the optimal plan would be not to start, thus achieving a payoff of 0.

\begin{figure}[ht] \centering 
  \begin{tikzpicture}[->,shorten >=1pt,auto,node distance=2cm,
    thin, scale=.8] 
    \node (s) [circle, draw, minimum size=.8cm] at (-3,0) {$s$}; 
    \node (v) [circle, draw, minimum size=.8cm] at (0,1) {$v$}; 
    \node (w) [circle, draw, minimum size=.8cm] at (0,-1) {$w$}; 
    \node (t) [circle, draw, minimum size=.8cm] at (3,0) 	{$t$};
    \node (R) at (4.5,0) 	{$R = 19$};
    \path (s) edge node {$1$} (v)
    (v) edge node {$12$} (t) 
    (s) edge node {$4$} (w)
    (w) edge node {$10$} (t)
    ; 
  \end{tikzpicture}
 \caption{An instance of the planning problem.}
  \label{fig:gym-example}
\end{figure}
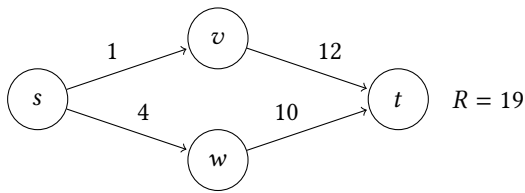

\xhdr{$\bm{(b,\lambda)}$-agents}
Now, let's consider how to model the behavior of biased agents on 
such a graph $G$.
Biased agents can deviate from optimal behavior in two ways: first
in how they misperceive the costs of paths in $G$ (and hence how
they misperceive payoffs), and second
in their potential misunderstanding of how they will behave in the future.
We describe these two components in turn, and then illustrate them by 
showing how biased agents behave in 
the example of Figure \ref{fig:gym-example}.

Our agents will exhibit both present bias and sunk-cost bias in general,
and we specify them using parameters $b \geq 1$ and $\lambda \geq 0$.
\begin{itemize}
\item The quantity $b$ is a present-bias parameter: when the agent is at a
node $u$ and considering the prospect of traversing a path $P$ beginning
with the edge $(u,v)$, 
it {\em perceives} the cost of the $u$-$v$ edge as being scaled up
by a multiplicative factor of $b$.
It adds this scaled-up cost to the actual costs of the remaining 
edges on $P$, resulting in a total {\em perceived cost} for $P$.
This reflects the overweighting of costs incurred in the present ---
in this case, the next edge to be traversed ---
that is associated with agents exhibiting this bias
\cite{laibson-quasi-hyperbolic}.\footnote{The model proposed by
\citet{laibson-quasi-hyperbolic} also include an exponential decay
on costs and rewards incurred in the future, where for a decay parameter
$\delta$, quantities experienced $\tau$ steps in the future are reduced by a 
factor of $\delta^\tau$.  In this paper we consider the case of 
$\delta = 1$, where there is no decay into the future, so as to focus
our attention on the present-bias parameter $b$.}
Thus, for example, an agent with present-bias parameter $b$ located at
$s$ in Figure \ref{fig:gym-example} would perceive the cost of the 
upper path as $b + 12$ and the lower path as $4b + 10$.
\item The quantity $\lambda$ is a sunk-cost parameter: if the agent
decides to {\em abandon} the traversal --- stopping at its current node $u$
and thus incurring no future cost or reward --- 
it exhibits a mental cost equal to $\lambda$ times the total cost it has incurred thus far.

This reflects the agent's aversion to giving up when it already
has incurred sunk cost in the traversal thus far;
incurring this as a final cost is motivated by constructions in the
literature on {\em mental accounting}
\cite{thaler-positive-theory-choice,thaler-mental-accounting}
and {\em realization utility}
\cite{barberis-realization-utility}.

\end{itemize}
We will refer to a biased agent with parameters $b$ and $\lambda$ as
a {\em $(b,\lambda)$-agent.}
Note that an agent who experiences neither present bias nor sunk-cost bias
has $b = 1$ and $\lambda = 0$, and hence is a $(1,0)$-agent.

\xhdr{Future Selves, Naivete, and Sophistication}
So far we have described the way a biased agent perceives costs;
now we need to describe the process by which it forms a plan on the
graph $G$.
Since our focus is on agents who may have different preferences
in the future than they do in the present,
we adopt a style of exposition used in behavioral economics and
consider agents who reason about what their ``future selves'' will do.
This style of description is useful in our gym membership scenarios, 
for example, where the person buying the gym membership would
like his or her ``future self'' to go to the gym regularly, but is worried
that this future self will not feel like going when the time comes
to actually do it.

This is also a useful formulation
for our graph-theoretic model, because biased agents may
differ in how they believe their future selves are going to behave.
Suppose a $(b,\lambda)$-agent is currently located at a node $u$,
and is considering whether to traverse an edge $(u,v)$.
It imagines that when it reaches the node $v$, it will hand off
control of future planning to its ``node-$v$ self.''
Now, how does the agent believe its node-$v$ self will reason about
the remainder of the planning problem?
An agent who is {\em naive} about its biases believes that its
node-$v$ self will plan optimally starting from node $v$,
whereas an agent who is {\em sophisticated} about its biases
believes that its node-$v$ self will continue to behave like a 
$(b,\lambda)$-agent.
Since in our model, an agent's parameters remain constant for the
duration of the planning problem, a sophisticated agent is correct
in its belief about its node-$v$ self, while a naive agent is incorrect
in its belief.
Importantly, both types of agents care about the costs incurred
by their future selves as well as their own costs; they just scale
up the cost of the immediate next edge by a factor of $b$ when they
determine the total cost of a path, reflecting the fact that they
value costs to themselves a factor of $b$ higher than they value costs
to their future selves.

There is extensive empirical evidence that people can behave more like 
naive agents or more like sophisticated agents in different scenarios ---
sometimes we make a plan believing that we'll be fully motivated
to follow through on it when the time comes, and sometimes we 
factor into our planning the belief that we might not be inclined
to take the necessary step in the future
\cite{frederick-time-inconsist-surv}.

There are thus multiple types of agents, and as we will see next,
they exhibit a range of intuitively natural behaviors that reflect
how their biases --- and their awareness of these biases --- interact.
In keeping with the fact that there are two kinds of biases under consideration,
we will refer to the two types of agents discussed above as
{\em doubly naive} and {\em doubly sophisticated}, indicating that
such agents are either naive about both biases or sophisticated about
both biases.  Later we will consider the natural question
of agents who are naive about one bias and sophisticated
about the other.

\subsection{Two Examples}

With the core definitions established, it is very useful to consider
the behavior of these agents in some basic examples, for two reasons.
First, given the subtle distinctions among different agent types, it
is useful to see these distinctions through simple illustrations;
and second, these examples help establish that the behaviors
we are modeling are all intuitively quite natural.

\xhdr{Health Club Memberships}
We'll start with the behavior of these agents on the instance in 
Figure \ref{fig:gym-example}.
To begin with, we note that the planning problem described by
the graph in Figure \ref{fig:gym-example} has a direct interpretation
in terms of decisions about gym membership, as follows.
\begin{quote}
A local health club offers a range of classes, and you're interested
in taking its yoga class.
The effort required to take the yoga class is 10, and the long-term
reward from having taken it is 19.
To take the yoga class you need to get a membership at the health club,
and there are two options for memberships.  
With a {\em basic membership}, you pay 1 up front, and pay 2 for each class 
at the time you attend it.
With a {\em deluxe membership}, you pay 4 up front, but then all classes
are free.
You know that you only want to take the yoga class, not any
of the other classes (since none of the
other classes at the health club appeal to you).
What should you do?
\end{quote}
It is easy to check that the graph in Figure \ref{fig:gym-example} encodes
this story, with node $v$ corresponding to the state in which you've
purchased a basic membership (but haven't yet taken the class),
node $w$ corresponding to the state in which you've
purchased a deluxe membership (but haven't yet taken the class),
and node $t$ corresponding to the state in which you've completed
the yoga class (and so can now achieve the reward of 19).
An optimal agent would buy the basic membership (using the path through $v$),
since there's no reason to pay 4 for a deluxe membership when
just the yoga class can be taken for a cost of $1 + 12 = 13$.

Now, what would doubly naive or a doubly sophisticated agent
do in this situation?
For concreteness, let's use $b = 2$ and $\lambda = 1/2$ as the parameters
for our example.  In particular, this means that both types of
biased agents --- doubly naive and doubly sophisticated ---
will perceive the cost on the first edge out of $s$ as being
multiplied by a factor of $2$; they differ in how they reason about
the remainder of the planning problem.
As part of this reasoning, it is important to distinguish between 
an agent's {\em perceived payoff} at a given point in the traversal, and the
actual payoff it incurs, which is determined entirely by the true
costs and rewards on the graph, rather by the agent's biases.
\begin{itemize}
\item A doubly naive agent would perceive the path through $v$ as costing
$2 \cdot 1 + 12 = 14$ and the path through $w$ 
as costing $2 \cdot 4 + 10 = 18$.
It also believes that it will behave optimally starting from 
whichever node it visits next.
It thus traverses the edge from $s$ to $v$.
Once it is at $v$, however, it now evaluates the cost of the $v$-$t$ edge
as $2 \cdot 12 = 24$, which means that paying this cost to get the
reward of 19 leads to a perceived payoff of -5.
On the other hand, abandoning the path incurs a sunk cost penalty of
$1 \cdot \lambda = 1/2$; since this perceived payoff of $-1/2$ is preferable to
the perceived payoff of $-5$ from continuing, 
the agent abandons the path at $v$.
In summary: the doubly naive agent buys the basic membership, but when
the time comes to take the yoga class, it lets the membership go to waste.
\item A doubly sophisticated agent first reasons about how it
would expect to behave starting from node $v$ and from node $w$.
From node $v$, with a sunk cost of $1$, it would behave the way the
doubly naive agent actually behaved when it reached $v$ --- comparing
a perceived payoff of $-5$ from continuing 
with a perceived payoff of $-1/2$ from abandoning ---
and so it would abandon the path if it were at node $v$.
From node $w$, with a sunk cost of $4$, it would get a perceived payoff of 
$19 - 2 \cdot 10 = -1$ from continuing to $t$, and a perceived payoff of 
$- 4 \lambda = -2$ (from the sunk-cost penalty of $2$) if it were
to abandon the path at $w$.  Thus, from $w$ it would continue to $t$.
Finally, back at $s$, the agent reasons that the path through $w$ to $t$
has perceived cost $2 \cdot 4 + 10 = 18$ 
and hence a perceived payoff of $19 - 18 = 1$
to its present self, since it knows it will continue from $w$.
Thus it chooses to go to $w$.
The informal summary is that the doubly sophisticated agent buys
the deluxe membership, since it knows the fear of wasting the price of the
deluxe membership (as manifested through its sunk-cost bias)
will motivate it to take the yoga class, leading to a positive payoff.
\end{itemize}
The upshot is that the optimal agent, the doubly naive agent, and the
doubly sophisticated agent all pursue different plans: the optimal
agent makes effective use of the basic membership; the doubly naive
agent foolishly buys the basic membership and then doesn't actually take
the yoga class; and the doubly sophisticated agent buys the
deluxe membership as a commitment device to follow through on the yoga class.

It is also instructive to compare these outcomes to the plans pursued
by naive and sophisticated present-biased agents --- that is,
$(b,0)$-agents who experience only present bias
without sunk-cost bias.
A naive present-biased agent will follow the same plan as the doubly naive
agent above.
But a sophisticated present-biased agent will follow a plan distinct 
from all the ones we've seen so far: it will correctly recognize that it 
wouldn't continue from either node $v$ or node $w$, and consequently
it wouldn't start out from $s$.
In other words, a sophisticated present-biased agent wouldn't buy either
type of membership in the health club, because it realizes that it won't
take the yoga class when the time comes.

\xhdr{Completing Assignments in a Class}
We briefly consider a second example where the contrasts between the agents
turn out differently --- a version of an example 
from \cite{ko-ec14-full} involving assigned work in a class, adapted
to the types of agents we are considering here.

Suppose you're taking a 4-week class, and you must complete three 
short projects by the end of the class.  In each week you can choose 
to do 0, 1, or 2 of the projects; doing 0 projects in a given week
costs 0, doing 1 project costs 4, and doing 2 projects costs 10.
If you complete all three projects by the end of the 4 weeks, then you
pass the class, which comes with a reward of $R = 17.5$.
The graph $G$ associated with this planning problem is shown in
Figure \ref{fig:deadline-example}: the node $v_{ij}$ corresponds
to the state in which you're $i$ weeks into the class and you've
completed $j$ projects so far.

\begin{figure}[h]
\begin{center}
\includegraphics[width=3.50in]{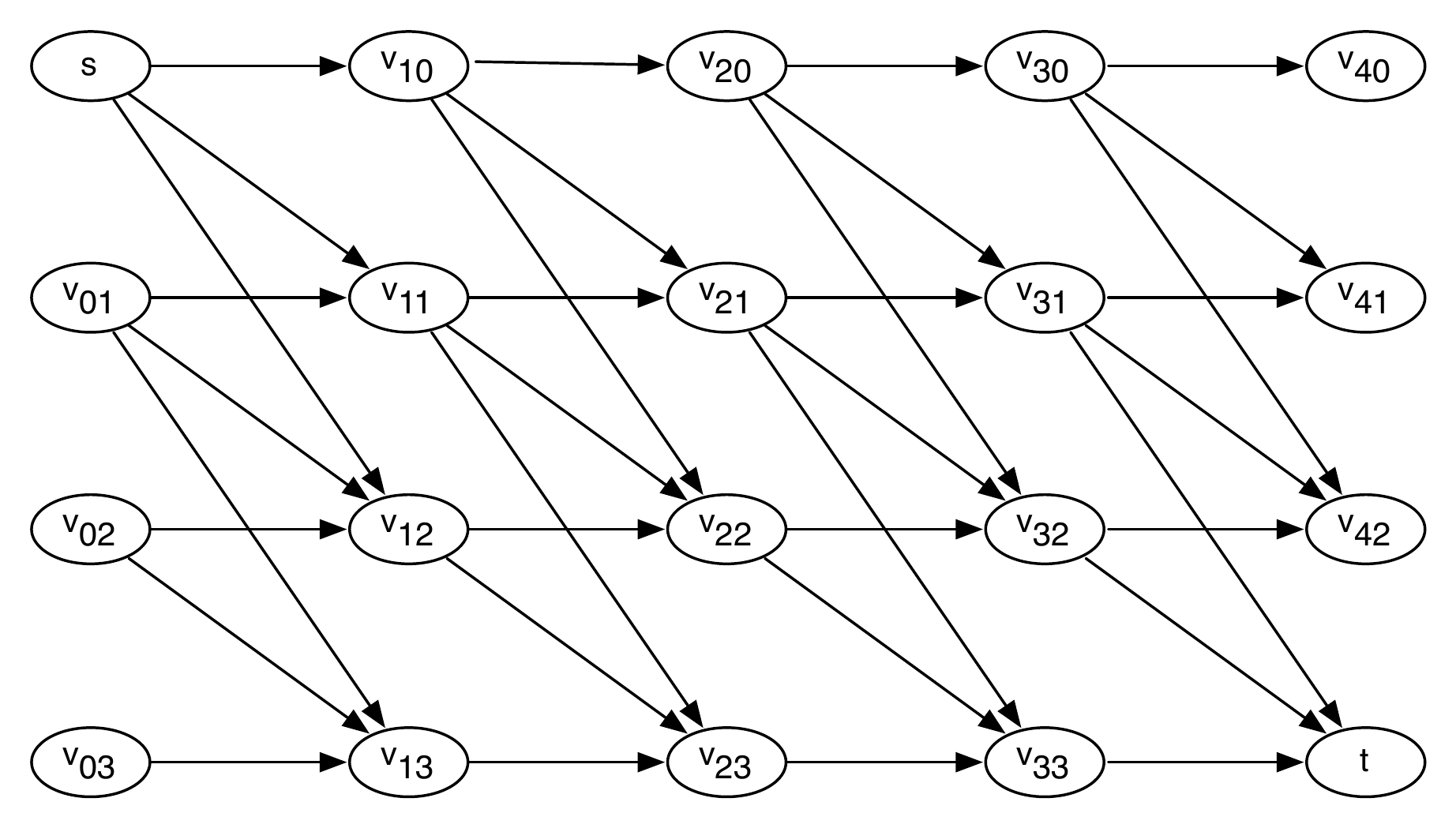}
\end{center}
\caption{A biased agent must choose a path from $s$ to $t$.}
\label{fig:deadline-example}
\end{figure}

An optimal agent would choose to do one project in each of 3 separate weeks,
for any 3 out of the 4 weeks,
incurring a total cost of 12 and hence a payoff of $17.5 - 12 = 5.5.$
Let's consider the behavior of biased agents with $b = 2$ and $\lambda = 3/4$;
we only sketch the reasoning for this example.
\begin{itemize}
\item A doubly naive agent will do one project in week 2, planning to do
one more project in each of weeks 3 and 4.  In week 3, it chooses to
defer both projects to week 4 (since $2 \cdot 4 + 4 > 10$).  In week 4,
it would incur a perceived payoff (due to sunk cost) of $-4 \lambda = -3$ from 
abandoning, and a perceived payoff 
of $17.5 - 2 \cdot 10 = -2.5$ from continuing,
so it will do both projects and finish the class.
\item A doubly sophisticated agent correctly anticipates that it will do
two projects in week 4, for a cost of 10.  Thus, when the time comes to do
the first project, its perceived cost will be $2 \cdot 4 + 10 = 18 > 17.5$,
and since it has no sunk cost at this point, it will choose to abandon
the path.  Given this, a doubly sophisticated will choose not to start out
from $s$, thus deciding not to take the class.
(Essentially, the doubly sophisticated agent says, correctly,
``I know that once I put some work into the class, I'm going to end up
pushing the rest of the work to the very end and overdo things in week 4.'')
\item A naive agent with only present bias --- i.e. a naive $(2,0)$-agent ---
will, like the doubly naive agent, get to week 4 needing to do two projects.
At that point, since it has no sunk cost bias, abandoning the path 
has a payoff of 0, 
while continuing has a perceived payoff of $17.5 - 2 \cdot 10 = -2.5$.
Thus it will abandon the path (dropping the class) in week 4.
\item A sophisticated agent with only present bias --- 
i.e. a sophisticated $(2,0)$-agent --- actually behaves optimally.
It correctly anticipates that if it reaches week 4 with two projects
left to do, it will abandon the path (since it has no sunk-cost bias),
and so it does one project in each of weeks 2, 3, and 4.
\end{itemize}

It's interesting that in contrast to the case of health club memberships,
where the doubly sophisticated agent reached the goal and the 
sophisticated present-biased agent didn't, here the roles are reversed;
the contrast is that this second example is one in which 
(i) the doubly sophisticated agent realizes that its present bias combined
with its sunk-cost bias will lead it down a path 
where it pays too high a price; and
(ii) there's an alternate path that the sophisticated present-biased
agent can take.

One conclusion from all these examples is that relatively small graphs
can encode scenarios that would otherwise be quite complicated to
reason about; and the interplay between present bias and sunk-cost bias
in these examples is producing intuitively natural
behaviors that inherently require both biases.

\subsection{Mixed Forms of Sophistication}

Since we are considering two biases at once, we should also consider the
possibility that an agent might be naive about one of its two biases
and sophisticated about the other.
Thus, a $(b,\lambda)$-agent at a node $u$, considering the traversal
of edge $(u,v)$, would be naive about its sunk-cost bias but sophisticated
about its present bias if it believed that its node-$v$ self will behave
like a $(b,0)$-agent who is sophisticated about its present bias.
Alternately, it would be naive about its present bias but sophisticated
about its sunk-cost bias if it believed that its node-$v$ self will behave
like a $(1,\lambda)$-agent who is sophisticated about its sunk-cost bias.

It is not hard to show that
in our model this latter type of agent, who is naive 
about present bias and sophisticated about sunk-cost bias, is indistinguishable
in its behavior from an agent who is naive about both
biases.\footnote{To see why this is the case, first observe that
an agent who has no present bias (i.e. $b = 1$) will behave the same
in our graph traversal problem regardless of whether or not
it has sunk-cost bias, and whether or not it is aware of it.
Since an agent who is naive about present bias plans paths on
the assumption that it will have $b = 1$ in the future, the plan
it makes from any node is indistinguishable from the plan of an agent
who is naive about both biases.}
Thus, we will focus on agents that are sophisticated about their present
bias and naive about their sunk-cost bias. We refer to such agents as {\em
singly sophisticated} agents.

It is interesting to consider how a singly sophisticated agent behaves
in the examples from Figure \ref{fig:gym-example} and
\ref{fig:deadline-example}.
In the health club example of Figure \ref{fig:gym-example}, the singly
sophisticated agent doesn't appreciate that sunk-cost bias will play
a role in its reasoning at nodes $v$ and $w$, and so at node $s$ it reasons
like a sophisticated $(b,0)$-agent and decides not to start out from $s$.
In the class-projects example of Figure \ref{fig:deadline-example},
the singly sophisticated agent again starts out reasoning like a
sophisticated $(b,0)$-agent and plans to do one project in 
each of weeks 2, 3, and 4.
Once it does the first project in week 2, however, it now has acquired
some sunk cost, and so it changes its plan to do both of the remaining
projects in week 4.
Thanks to the sunk-cost bias it goes ahead and does this,
since the perceived payoff of
$17.5 - 2 \cdot 10$ from finishing in week 4 is preferable to the 
perceived payoff of $-4 \lambda = -3$ from abandoning in week 4.
This second example shows that despite the agent's sophistication about
its present bias, its naivete about its sunk-cost bias means that
it can still sometimes be time-inconsistent in its behavior, changing
plans in the middle of its traversal of the graph $G$.

\xhdr{Perceived Rewards} 
We now describe an equivalent way of representing the payoff to 
an agent with sunk-cost bias.
Thus far, if a $(b,\lambda)$-agent stops without reaching $t$,
it incurs a negative payoff equal to $\lambda$ multiplied by 
the total cost of edges it has traversed.
In deciding whether to continue, it compares this negative payoff
from stopping to the perceived payoff from continuing to $t$
(equal to the reward $R$ minus the perceived cost of upcoming edges).
An equivalent way to express this comparison is to add $\lambda$ times
the cost incurred so far to the reward, creating a new (larger)
{\em perceived reward}.  The agent continues if and only if this
perceived reward is at least as large as the perceived cost of the
upcoming edges it plans to traverse.  In this way, there is no 
explicit sunk-cost penalty from stopping; rather, the sunk-cost
bias is reflected in the growing reward that the agent perceives,
incorporating $\lambda$ times the cost experienced so far.
We will use this equivalent formulation in the remainder of the paper.

\subsection{Overview of Results}

In the remainder of the paper, we provide a set of performance guarantees
and algorithmic results for the types of biased agents defined 
in this section.  
We give a brief summary of some of the main results here.

We first consider doubly sophisticated agents, and in particular the planning
problem for such agents.  Algorithmically, such agents face a non-trivial
planning task, since in choosing a next step, they
must consider what their future selves will do not just from every
node, but for every possible value of the sunk cost they might
experience from that node.
We give an algorithm for solving the planning problem for
doubly sophisticated agents that runs in time polynomial in the number
of nodes $n$ and the total sum $C$ of edge costs in the graph.
This is a {\em pseudo-polynomial} algorithm in that its
running time depends on the actual magnitudes of the costs in the instance,
and it is natural to ask whether there might be some 
better algorithm that avoids this form of dependence on the costs.
We show, however, that this dependence is necessary (assuming $P \neq NP$),
by proving that the planning for a doubly sophisticated agent is
NP-hard when the edge costs are presented in binary notation
(and hence the input has size polynomial in $n$ and $\log C$).

In a positive direction, 
we are able to show that doubly sophisticated
agents always achieve reasonably good payoffs.
In particular, we find that if $C_o$ denotes the cost 
incurred by an optimal agent that reaches $t$ in a given instance,
then the payoff of a doubly sophisticated agent is 
smaller than the payoff of an optimal agent by
an additive amount of at most $(b-1) C_o$.
As one direct consequence of this fact, a doubly sophisticated
agent will reach the target node $t$ in any instance for which
the reward $R$ is at least $b C_o$.
We show similar additive gaps between the payoff of a doubly
sophisticated agent and the payoff of a sophisticated present-biased agent
(with no sunk-cost bias, and hence parameters $(b,0)$); there are
instances in which either can achieve a better payoff than the other,
but the gap between them always remains bounded by $(b-1) C_o$.

For doubly naive agents, we show that their sunk-cost bias
can push them to incur costs that are much higher than the available
reward $R$.  In particular, they can incur a cost that
is exponential in the size of the graph.
We find upper and lower bounds on the worst-case cost, with
exponential bases that are close to one another between the two bounds.

Using a more complex construction, we can show that exponentially
bad bounds apply to the singly sophisticated agent as well.
Despite its sophistication about its present bias, it is possible
for a singly sophisticated agent to incur exponentially large cost
before abandoning the traversal without reaching $t$.
We complement this with a nearly matching upper bound, which also
shows that the cost incurred by a singly sophisticated agent is only
exponential in the number of ``switches'' --- 
nodes at which the agent changes its plan.

\section{Doubly Sophisticated Agents}  \label{SEC:DOUBLY-SOPH}
In this section we consider doubly
sophisticated agents. Recall that these are agents that are sophisticated about both their
present bias and sunk cost bias. A doubly sophisticated
agent accurately predicts the decisions that its future selves will make, meaning that the agent will follow the path it plans to take. In particular, this means that the agent won't begin traversing the graph unless it is sure that it will reach the target. 

Path-planning for an agent that is sophisticated but has no sunk cost bias
is
straightforward -- at a node $v$, the agent's action is purely a function of its
decisions at later nodes in the topological ordering, so its decisions can be
recursively computed. With sunk cost bias, however, this is no longer the case.
An agent's decision depends not only on its future decisions but also on its past decisions
and particularly on the cost it has incurred reaching $v$. Thus, to plan its path it needs 
to know its future behavior for all possible values of the cost incurred.

In the next section we will see that when the number of possible values of the cost incurred at every node is small the agent can efficiently recursively compute the path it will take and discuss the special case in which the cost on the edges have integer values.

\subsection{Integer Doubly-Sophisticated Path Computation}

    As we will later see the general path computation problem for a doubly sophisticated agent
	is NP-hard. Let $k$ be an upper bound on the number of possible different values of the costs for reaching a node.  \full{In Appendix \ref{app:rec-alg}} \camera{In the full version} we present a recursive algorithm for path computation that runs in time polynomial in $k$ and $n$. Here we present an iterative dynamic program algorithm for the case that the edges have integer costs. For this case we take $k$ to be the sum of all edge costs and exhibit a pseudo polynomial algorithm. Such an algorithm is both easier to follow and illustrates well the way that a doubly sophisticated agent reasons about the behavior of its future selves to plan its path. 

\begin{proposition}
	The integer doubly sophisticated path computation problem can be solved in
	time polynomial in $n$ and $C$, where $n$ is the number of vertices in $G$ and
	$C$ is the sum of the costs of the edges.
\end{proposition}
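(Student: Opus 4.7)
The plan is to build a dynamic program indexed by pairs $(v, s)$, where $v$ is a node and $s \in \{0, 1, \ldots, C\}$ is a possible value of the sunk cost the agent has incurred en route to $v$. Since edge costs are non-negative integers and the sum of edge costs is $C$, the sunk cost at any reachable node is a non-negative integer at most $C$, so the table has size $O(nC)$. For each state $(v,s)$ the table records three pieces of information: the action $\text{Act}(v,s)$ that a doubly sophisticated agent at $v$ with sunk cost $s$ takes (either ``abandon'' or ``traverse edge $(v,w)$'' for some $w$), the actual remaining traversal cost $\text{Cost}(v,s)$ this induces, and a boolean $\text{Reached}(v,s)$ indicating whether $t$ is reached.

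I would fill in this table in reverse topological order of $G$. The base case is $v = t$: set $\text{Cost}(t,s) = 0$ and $\text{Reached}(t,s) = \text{true}$ for every $s$. For an internal node $v$ with sunk cost $s$, use the equivalent perceived-reward formulation stated just before the proposition: the agent awards itself an effective reward $R + \lambda s$ only if it ultimately reaches $t$, and pays no explicit penalty for abandoning. Since the agent is sophisticated, it predicts its node-$w$ self's behavior exactly, which by induction is already encoded in the row for $w$. Hence for each outgoing edge $(v,w)$ with cost $c$, set $s' = s + c$ and compute the perceived payoff
\[
\Pi(v,w,s) \;=\; -\,b\,c \;-\; \text{Cost}(w, s') \;+\; (R + \lambda s)\cdot \mathbf{1}\!\left[\text{Reached}(w, s')\right],
\]
while abandoning has perceived payoff $0$. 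The action $\text{Act}(v,s)$ is the option of maximum perceived payoff (with a fixed tie-breaking rule), and then $\text{Cost}(v,s)$ and $\text{Reached}(v,s)$ are read off from this chosen action in the obvious way (abandoning gives cost $0$ and $\text{Reached} = \text{false}$; traversing $(v,w)$ gives cost $c + \text{Cost}(w,s')$ and inherits $\text{Reached}(w,s')$). Finally, the agent's path from $s$ is obtained by following $\text{Act}(\cdot, \cdot)$ starting from $(s, 0)$.

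Correctness is a straightforward induction on reverse topological position: the inductive hypothesis is that, for every successor $w$ of $v$ and every integer sunk cost $s' \in \{0,\ldots,C\}$, the triple $(\text{Act}, \text{Cost}, \text{Reached})(w,s')$ describes exactly the future behavior of the doubly sophisticated agent placed at $w$ with sunk cost $s'$. Because ``doubly sophisticated'' means the agent predicts its future selves perfectly, the perceived payoff the agent at $(v,s)$ actually computes when deliberating over edge $(v,w)$ coincides with $\Pi(v,w,s)$ above; hence maximizing $\Pi$ reproduces the agent's actual choice, completing the step. For the running time, each state $(v,s)$ is processed in time proportional to $\deg_{\mathrm{out}}(v)$, giving total running time $O(mC) = \text{poly}(n, C)$.

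The main subtlety — and the step I would be most careful about — is getting the perceived-payoff expression exactly right, because the two biases enter asymmetrically: the present-bias multiplier $b$ scales only the single next edge $c_{vw}$, whereas all subsequent edge costs enter at face value via $\text{Cost}(w,s')$; and the sunk-cost bias manifests through the augmented reward $R + \lambda s$ (with the current self's sunk cost $s$, not $s'$), which is redeemed only if the predicted continuation actually terminates at $t$. Making this distinction cleanly in the write-up, and checking that the recurrence only accesses table entries at strictly later topological positions with integer sunk costs in $\{0,\ldots,C\}$, is the only real content in the proof; everything else is bookkeeping.
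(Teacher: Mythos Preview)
Your proposal is correct and is essentially the same as the paper's proof: both build a dynamic-programming table indexed by (node, integer sunk cost) of size $O(nC)$, fill it in reverse topological order using the perceived-reward comparison $R + \lambda s$, and then trace the path from $(s,0)$. The only cosmetic difference is that the paper phrases the per-cell computation as minimizing perceived cost and setting $costs = \infty$ on abandonment, whereas you maximize perceived payoff and carry a \text{Reached} boolean; these are equivalent formulations.
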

\begin{proof}
	
	\aref{intds} solves the integer doubly-sophisticated path computation problem
  in time polynomial in $n$ and $C$. The algorithm relies on the observation
  that path that the agent will choose from some node $u$ only depends on the
  total cost of the path it took to get to $u$ and not on the explicit path.
  Since the costs on the edges are integers, we can compute for each node $u$
  and for each possible cost of the path reaching $u$ what the cost of going
  from $u$ to $t$ is. This can be done in reverse topological order -- at node
  $u$ with some sunk cost $i$, if we know what the agent will do at every
  subsequent node $v$ with sunk cost $i + c(u, v)$, we can determine the
  agent's behavior at $u$.
	
  In the algorithm we define two arrays -- $choices$ and $costs$ -- that hold the
  choice and cost of the path that the
	agent would take if it reached a vertex $u$ with a sunk cost of $i$. We begin
  filling in these arrays in reverse topological order, since at $t$, there is
  no decision to be made, and the cost of the remaining path is 0.

  If the agent reaches some vertex $u$ having incurred
	a cost of $i$ along the way, then its choice of where to go next is uniquely
	determined by the successor vertices it can go to. Since we do the computation
	in reverse topological order, we know that for any successor $v'$, the
	$choices$ and $costs$ values have already been computed. Therefore, the agent
	can simulate what would happen along each potential path simply by looking up
	the $costs$ value of reaching $v'$ with an incurred cost of $i + c(u,v')$.
  Out of all potential successors $v'$, the agent chooses the $v$ that minimizes
  its perceived cost, where perceived cost is given by
	$b \cdot c(u,v') + costs[v][i+c(u,v)] $.

  If this
	perceived cost is larger than the perceived reward, which is given by
  $R$ plus $\lambda$ times the cost incurred so far, then the agent would abandon upon
	reaching $u$ after an incurred cost of $i$. Otherwise, it would proceed to
	$v$, and the total cost of the path it would take from $u$ to $v$ would be
  $c(u,v) + costs[v]$. In either case, the algorithm correctly computes the
  action of the agent.
	
	At the start of the traversal, the agent is at $s$ with an incoming cost of 0.
  Therefore, we can look up $choices[s][0]$ to see where the agent would
	go next, and so on until we find the path that the agent would take to $t$,
	updating the incurred cost so far as we go.
\end{proof}

\alg{\textsc{IntegerDoublySophisticated}$(G, R, b, \lambda)$}{intds}
\State $n \gets |V|$
\State $C \gets$ sum of the edge costs
\State $choices \gets array[n][C]$ initialized to $null$
\State $costs \gets array[n][C]$ initialized to $0$
\For{$u \in V \backslash \{t\}$ in reverse topological order}
\For{$i \gets 0 \dots C$}
\State $v \gets \argmin_{v' \in N(u)} b \cdot c(u,v') + costs[v'][i+c(u,v')]$
\State $perceived \gets b \cdot c(u,v) + costs[v][i+c(u,v)]$
\If {$perceived > R + \lambda \cdot i}$
\State $choices[u][i] \gets null$
\State $costs[u][i] \gets \infty$
\Else
\State $choices[u][i] \gets v$
\State $costs[u][i] \gets c(u,v) + costs[v][i+c(u,v)]$
\EndIf
\EndFor
\EndFor
\If {$choices[s][0] == null$}
\State \Return no path
\EndIf
\State $path \gets []$
\State $cost \gets 0$
\State $u \gets s$
\While {$u \ne t$}
\State Append $u$ to $path$
\State $v \gets choices[u][cost]$
\State $cost \gets cost + c(u, v)$
\State $u \gets v$
\EndWhile
\State Append $t$ to $path$
\ealg

\subsection{The Gap Between a Doubly Sophisticated Agent and an Optimal Agent}
As the payoff of a doubly sophisticated agent is always non-negative, the only
instances that can admit a positive gap are ones in which the optimal agent
reaches the target. Let $C_o(u)$ be the cost of the optimal agent for reaching
the target from $u$, which means that $C_o(s) = C_o $. We show that there can be
an additive gap of at most $(b-1)C_o$ between the payoffs of an optimal agent and a doubly sophisticated agent. We note that the source of the gap could be either because the doubly sophisticated agent did not traverse the graph or because both agents traversed the graph but the cost of the doubly sophisticated agent was higher. 

Instead of proving the gap directly we show that a similar claim holds in a more general setting. An agent currently at $v$ that exhibits sunk cost bias perceives a different reward
based on the path it took to $v$. In particular if the agent took a path $P$ to get to $v$ then its perceived reward at $v$ is $R + \lambda \cdot c(P)$, where $c(P)$ is the
total cost of the path $P$. To generalize this, we define a reward schedule $H$ as a mapping from paths beginning at $s$ to
rewards. When computing a path for a graph $G$ with reward schedule $H$, an
agent makes its calculations as if after following a path, the reward it
will get when it reaches $t$ is given by the reward schedule.

\begin{proposition} \label{prop:general_cost_ratio}
		Given a graph $G$ and a path-dependent reward schedule $H$, if the perceived reward according to $H$ at each vertex $v$ on the optimal path from $s$
		to $t$ is at least $b \cdot C_o(v)$, then a present-bias sophisticated agent will traverse the graph and incur a cost of at most $b \cdot C_o(s)$ .
\end{proposition}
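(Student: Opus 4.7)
The plan is to prove the statement by reverse induction along the optimal path. Fix one optimal path $s = v_0, v_1, \ldots, v_\ell = t$ and let $P_i$ denote its prefix $v_0, \ldots, v_i$. Since any suffix of an optimal path is itself optimal, $C_o(v_i) = c(v_i, v_{i+1}) + C_o(v_{i+1})$. I would then establish by induction on $\ell - i$ the following claim: if the present-bias sophisticated agent arrives at $v_i$ via $P_i$, then it does not abandon, and it reaches $t$ incurring actual cost from $v_i$ at most $b \cdot C_o(v_i)$. The base case $i = \ell$ is immediate since $C_o(t) = 0$, and applying the claim at $i = 0$ (with trivial prefix) yields the proposition.

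For the inductive step, I would analyze the agent's reasoning at $v_i$. Because the agent is sophisticated, it predicts that if it traverses $(v_i, v_{i+1})$, its future self, arriving at $v_{i+1}$ via $P_{i+1}$, will behave exactly as the inductive hypothesis says: continue to $t$ with actual cost at most $b \cdot C_o(v_{i+1})$. Hence the perceived cost at $v_i$ of the plan ``go via $v_{i+1}$'' is at most $b \cdot c(v_i, v_{i+1}) + b \cdot C_o(v_{i+1}) = b \cdot C_o(v_i)$, which by the hypothesis of the proposition is at most the perceived reward at $v_i$ under $P_i$. This plan has non-negative perceived payoff, so the sophisticated agent will not abandon; it will choose some successor $v^*$ whose perceived cost is at most that of the plan through $v_{i+1}$, hence at most $b \cdot C_o(v_i)$.

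Finally, I would convert this perceived-cost bound into an actual-cost bound. The chosen plan's actual cost from $v_i$ is $c(v_i, v^*) + A(v^*)$ and its perceived cost at $v_i$ is $b \cdot c(v_i, v^*) + A(v^*)$, where $A(v^*)$ is the (accurately predicted) actual onward cost from $v^*$; the two quantities differ only in the $b$ factor on the first edge. Since $b \geq 1$ and edge costs are non-negative, actual cost $\leq$ perceived cost $\leq b \cdot C_o(v_i)$, completing the induction. The main subtlety to handle is that $v^*$ need not equal the optimal successor $v_{i+1}$: the optimal successor is used only to upper bound the minimum-perceived-cost option and to certify non-abandonment, while the final actual-cost bound is obtained generically for whichever $v^*$ the agent actually selects.
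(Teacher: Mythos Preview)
Your proof is correct and follows essentially the same approach as the paper: reverse induction along the optimal path, using the optimal successor as a witness that the agent has a viable continuation with perceived cost at most $b \cdot C_o(v_i)$, then noting that whatever the agent actually chooses has perceived cost no larger, and finally observing that actual cost is bounded by perceived cost since $b \ge 1$. The only cosmetic difference is that the paper phrases the inductive hypothesis as holding ``regardless of the path taken to get to $v$,'' whereas you state it for the specific optimal-path prefix $P_i$; your version is all that is needed, and the argument is otherwise identical.
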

\begin{proof}
	Let $P$ be the optimal path from $s$ to $t$. We will prove by induction that from each vertex $v$ along $P$, there exists a path of cost at most $b \cdot C_o(v)$ from $v$ to $t$ that the agent
	would be willing to take for the reward schedule $H$, regardless of the path
	taken to get to $v$. Let $C_H(v)$ denote the cost for a sophisticated
  present-biased agent to reach $t$ from $v$ given the schedule $H$.
	
	\textbf{Base case:} At $t$, the claim is trivially true.
	
	\textbf{Inductive hypothesis:} $C_H(v) \le b \cdot C_o(v)$, and $v$ is never
	abandoned under $H$.
	
	\textbf{Inductive step:} Consider some vertex $u$ on $P$ and assume that the inductive hypothesis holds for all the vertices after $u$ on $P$. Let $v$ be the next
	vertex after $u$ on $P$. By assumption, the perceived reward at $u$ is some $R_u \ge b \cdot
	C_o(u)$. We know by induction that if the agent reaches $v$, the rest of the
	path will cost $C_H(v) \le b \cdot C_o(v)$. Therefore, the perceived cost of
	going from $u$ to $v$ and then from $v$ to $t$ is
	\begin{align*}
	b \cdot c(u,v) + C_H(v) \le b \cdot c(u,v) + b \cdot C_o(v) = b \cdot C_o(u) \le R_u
	\end{align*}
	Thus, the agent would be willing to take this path, so $u$ could never be
	abandoned. Furthermore, this implies that the agent will take some path 
	(either the one discussed above or a different one)
	that its perceived cost is at most $b \cdot C_o(u)$. As the total cost is 
	always smaller than the perceived cost this implies that the total cost 
	of the path that the agent will take is at most $b \cdot C_o(u)$ as required. 
\end{proof}

We can now use Proposition \ref{prop:general_cost_ratio} to bound the gap between an optimal agent and a doubly sophisticated agent.
\begin{proposition} \label{prop:double_cost_ratio}
	Consider a task graph $G$ with a reward $R$ on the target. 
	The payoff of an optimal agent can be higher than the payoff
	of a $(b,\lambda)$-doubly sophisticated agent by an additive amount of at most $(b-1)C_o$.
\end{proposition}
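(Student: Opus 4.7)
The plan is to deduce this bound from Proposition~\ref{prop:general_cost_ratio} by viewing a doubly sophisticated $(b,\lambda)$-agent as a sophisticated present-biased agent operating under a path-dependent reward schedule. Specifically, invoking the ``Perceived Rewards'' reformulation stated earlier in the paper, the sunk-cost bias can be folded into the reward: define the schedule $H(P) = R + \lambda \cdot c(P)$. Under this reinterpretation, the doubly sophisticated agent's decision at each node, including whether to abandon, is exactly the decision that a sophisticated present-biased agent would make on $(G, H)$. So it suffices to bound the cost of that agent.

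Next I would split on the size of $R$ relative to $b C_o$. In the main case $R \ge b C_o$, I would verify the hypothesis of Proposition~\ref{prop:general_cost_ratio} along the optimal path $P^*$ from $s$ to $t$: for any vertex $v$ on $P^*$, the cost incurred along $P^*$ up to $v$ equals $C_o - C_o(v)$, so the perceived reward at $v$ under $H$ is
\[
R + \lambda \bigl(C_o - C_o(v)\bigr) \;\ge\; R \;\ge\; b C_o \;\ge\; b C_o(v),
\]
using $\lambda \ge 0$ and $C_o(v) \le C_o$. Proposition~\ref{prop:general_cost_ratio} then yields that the agent traverses the graph and incurs actual cost at most $b C_o$, so its payoff is at least $R - b C_o$. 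Comparing to the optimal payoff $R - C_o$ gives the additive gap of at most $(b-1)C_o$.

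In the remaining case $R < b C_o$, the bound is immediate: the optimal payoff is at most $\max(0, R - C_o) < (b-1) C_o$, while the doubly sophisticated agent's payoff is nonnegative (it can always refuse to start, and if it does start it is, by sophistication, certain to reach $t$). Hence the gap is again at most $(b-1)C_o$.

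The only genuinely delicate point is the first step: that a doubly sophisticated $(b,\lambda)$-agent really does behave identically to a sophisticated $(b,0)$-agent under the schedule $H(P) = R + \lambda c(P)$. This is not a separate computation so much as an appeal to the equivalence established in the ``Perceived Rewards'' paragraph, applied recursively at each node so that the inductive reasoning inside Proposition~\ref{prop:general_cost_ratio} can be transported without modification. Once that is accepted, the rest of the argument is just the two-case comparison above.
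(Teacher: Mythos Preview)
Your proposal is correct and follows essentially the same route as the paper: both arguments identify the doubly sophisticated agent with a sophisticated present-biased agent under the monotone schedule $H(P)=R+\lambda\,c(P)$, split into the cases $R\ge bC_o$ and $R<bC_o$, invoke Proposition~\ref{prop:general_cost_ratio} in the former, and use the trivial bound on the optimal payoff in the latter. Your write-up is slightly more explicit in verifying the hypothesis of Proposition~\ref{prop:general_cost_ratio} along the optimal path, but the substance is the same.
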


\begin{proof}
	First, observe that if $R \leq b \cdot C_o$ then the payoff of the optimal agent is at most $(b-1) \cdot C_o$ and the proposition holds. Next, recall that the reward schedule that is used 
	to describe the behavior of a doubly sophisticated agent is monotonically increasing: the perceived reward at any
	vertex $v$ along a path $P$ is $R + \lambda \cdot C_P(v)$, where $C_P(v)$ is the
	cost for reaching $v$ on the path $P$. This means that if $R\geq b \cdot C_o$ we can
  apply Proposition \ref{prop:general_cost_ratio} and get that the doubly
  sophisticated agent will reach the target and pay a cost of at most $b \cdot
  C_o$. Hence, in this case as well, the difference in the payoffs of the optimal agent and doubly sophisticated agents is at most $ b \cdot C_o -  C_o = (b-1) C_o$.
\end{proof}
	
Lastly the fact that the agent is always willing to traverse the graph for a reward of $R=b \cdot C_o$ leads us to the following corollary: 

\begin{corollary} \label{cor:dsrewardratio}
	The minimum reward $R$ for which a $(b,\lambda)-$doubly sophisticated agent would be willing to traverse a graph $G$ is at most $b \cdot C_o$.
\end{corollary}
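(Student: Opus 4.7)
The plan is to derive this corollary directly from Proposition \ref{prop:general_cost_ratio} by simply checking that its hypothesis holds when we set $R = b \cdot C_o$. The key observation is that the behavior of a $(b,\lambda)$-doubly sophisticated agent is equivalent to the behavior of a sophisticated present-biased agent operating under the reward schedule $H(P) = R + \lambda \cdot c(P)$ (this is exactly the ``perceived rewards'' reformulation introduced at the end of Section 1.3, and it is how the proof of Proposition \ref{prop:double_cost_ratio} already invoked Proposition \ref{prop:general_cost_ratio}).

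First, I would let $P^*$ be an optimal path from $s$ to $t$ in $G$, and for any vertex $v$ on $P^*$, observe that $C_o(v) \le C_o(s) = C_o$, since the suffix of $P^*$ starting at $v$ is a valid $v$-to-$t$ path and so the optimal $v$-to-$t$ cost can only be smaller. Next, for each such $v$, I would write the perceived reward at $v$ under $H$ as $R + \lambda \cdot C_{P^*}(v) \ge R$, since $\lambda \ge 0$ and costs are non-negative. Taking $R = b \cdot C_o$, this gives perceived reward at least $b \cdot C_o \ge b \cdot C_o(v)$ at every vertex on the optimal path, which is exactly the hypothesis of Proposition \ref{prop:general_cost_ratio}.

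I would then invoke Proposition \ref{prop:general_cost_ratio} to conclude that the sophisticated present-biased agent operating under the schedule $H$ — equivalently, our doubly sophisticated agent — will in fact traverse the graph all the way to $t$ when $R = b \cdot C_o$. Therefore, the minimum $R$ for which the agent is willing to traverse $G$ is at most $b \cdot C_o$, which is the content of the corollary.

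There is no real obstacle here: the corollary is essentially a packaging of one direction already established inside the proof of Proposition \ref{prop:double_cost_ratio}. The only small thing to be careful about is the inequality $C_o(v) \le C_o$ for $v$ on the optimal path, which requires noting that we are comparing optimal $v$-to-$t$ cost against a particular (suffix) $v$-to-$t$ path — a one-line argument. Beyond that, the conclusion is immediate from Proposition \ref{prop:general_cost_ratio} applied to the monotonically increasing reward schedule induced by sunk-cost bias.
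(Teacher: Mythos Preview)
Your proposal is correct and matches the paper's approach: the corollary is stated there without a separate proof, as it follows immediately from the observation (made inside the proof of Proposition~\ref{prop:double_cost_ratio}) that when $R \ge b \cdot C_o$ the monotone reward schedule $H(P) = R + \lambda \cdot c(P)$ satisfies the hypothesis of Proposition~\ref{prop:general_cost_ratio}, so the doubly sophisticated agent reaches $t$. Your explicit check that $C_o(v) \le C_o$ for $v$ on the optimal path is the one detail the paper leaves implicit, and your argument for it is fine.
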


In Section \ref{section:ds-spb-comp} we present results for similar comparisons
between doubly sophisticated and sophisticated present-biased agents, finally in Section \ref{sec-doubly-np} we show that computing the path that a doubly sophisticated agent takes is NP-hard.

\subsection{Doubly Sophisticated Agents Versus Sophisticated Present-Biased
	Agents} \label{section:ds-spb-comp}
To better understand the interplay between present bias and sunk cost
bias, in this section we contrast between doubly sophisticated agent and
sophisticated present-biased agents. By Proposition
\ref{prop:double_cost_ratio} and \ref{prop:general_cost_ratio} we have that
for each of the agents the additive gap between its payoff and the payoff of
an optimal agent is at most $(b-1)C_o$ \footnote{For a present-bias sophisticated agent this gap was
	first proven in \cite{kor-ec16}.}. As the payoff of each of the agents is
at most the payoff of an optimal agent, we have that the gap between the
payoffs of a doubly sophisticated agent and sophisticated present-biased
agents is $(b-1)$. This proves the following claim:
\begin{claim} 
	The additive gap between the payoffs of a doubly sophisticated agent and a
	sophisticated present-biased agent is at most $(b-1)C_o$. 
\end{claim}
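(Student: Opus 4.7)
The plan is to deduce the two-sided gap directly by combining Proposition~\ref{prop:double_cost_ratio} with an analogous statement for the sophisticated present-biased agent, observing that both agents' payoffs sandwich between $0$ and the optimal payoff.

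First, I would invoke Proposition~\ref{prop:double_cost_ratio} to get $P_{opt} - P_{ds} \le (b-1)\,C_o$, where $P_{opt}$ and $P_{ds}$ denote the payoffs of the optimal and doubly sophisticated agents respectively. Next, I would apply Proposition~\ref{prop:general_cost_ratio} to a sophisticated present-biased agent by using the constant reward schedule $H(P) = R$ for every path $P$: this schedule trivially satisfies, at every vertex $v$ on the optimal path, the hypothesis $R \ge b \cdot C_o(v)$ whenever $R \ge b \cdot C_o$, since $C_o(v) \le C_o(s) = C_o$. The proposition then yields that the sophisticated present-biased agent reaches $t$ with cost at most $b\cdot C_o$, so $P_{opt} - P_{spb} \le (b-1)\,C_o$. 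When instead $R < b\cdot C_o$, the optimal payoff itself is at most $(b-1)\,C_o$, and since the sophisticated present-biased agent's payoff is nonnegative, the gap is again at most $(b-1)\,C_o$ on that side. (As noted in the footnote, this one-sided bound for the sophisticated present-biased agent was originally established in \cite{kor-ec16}.)

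Finally, since neither the doubly sophisticated nor the sophisticated present-biased agent can exceed the optimal payoff, we have $P_{ds}, P_{spb} \le P_{opt}$. Combining this with the two one-sided bounds gives
\[
P_{spb} - P_{ds} \;\le\; P_{opt} - P_{ds} \;\le\; (b-1)\,C_o,
\qquad
P_{ds} - P_{spb} \;\le\; P_{opt} - P_{spb} \;\le\; (b-1)\,C_o,
\]
so $\lvert P_{ds} - P_{spb}\rvert \le (b-1)\,C_o$, which is the desired claim.

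There is no real obstacle here: the argument is a triangle-inequality-style synthesis of the two additive-gap bounds already established against the optimal benchmark. The only mild care needed is handling the case $R < b\cdot C_o$ separately when invoking Proposition~\ref{prop:general_cost_ratio}, since its hypothesis requires a sufficiently large perceived reward along the optimal path; but in that regime the optimal payoff is itself small enough that the gap bound follows immediately from nonnegativity of payoffs.
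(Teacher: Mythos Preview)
Your proposal is correct and follows essentially the same approach as the paper: both argue that each agent's payoff is within $(b-1)C_o$ of the optimal payoff (via Propositions~\ref{prop:double_cost_ratio} and~\ref{prop:general_cost_ratio}, the latter with the constant reward schedule for the sophisticated present-biased agent), and then use the fact that neither agent exceeds the optimal payoff to conclude the two-sided bound. Your write-up is in fact slightly more explicit than the paper's, spelling out the constant schedule $H(P)=R$ and the separate treatment of the $R < bC_o$ case.
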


In the next two claims we will see that this gap can go either way and it is tight in both directions. In other words, each of the type
of agents can do better than the other by this additive factor of
$(b-1)C_o$. We conclude that the way the two biases interact with one
another in agents that are sophisticated about them depends on the situation.

\begin{claim} 
	The payoff of a doubly sophisticated agent can be smaller than the payoff of a
	sophisticated present-biased agent by an additive amount arbitrarily close to $(b-1)C_o$. 
\end{claim}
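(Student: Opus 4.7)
The plan is to exhibit a small graph with a single branching vertex that traps the doubly sophisticated agent into an expensive continuation which the sophisticated present-biased agent refuses to take. The construction is designed so that both agents reach $t$ but along paths of different cost, and then parameters are tuned so that the cost difference approaches $(b-1)C_o$ from below.

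Concretely, I would take a five-vertex graph with edges $s\to v$ of cost $\epsilon$, a ``trap branch'' $v\to v_1\to t$ with edge costs $0$ and $C^\ast$, and a ``safe branch'' $v\to u\to t$ with edge costs $c$ and $0$. I would set $C^\ast = bc-2\delta$ (so the trap's perceived cost at $v$ is $C^\ast < bc$, beating the safe branch in present-biased perception) and $R = bC^\ast - \lambda\epsilon/2$. The key local fact is what happens at $v_1$ after sunk cost $\epsilon$ has accrued: the perceived cost of the final edge is $bC^\ast$, the perceived reward is $R$ for the sophisticated present-biased agent (since $\lambda=0$), but it is $R+\lambda\epsilon = bC^\ast + \lambda\epsilon/2$ for the doubly sophisticated agent. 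So the doubly sophisticated agent continues from $v_1$ to $t$, while the sophisticated present-biased agent abandons.

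Propagating this divergence back to $v$: the sophisticated present-biased agent treats the trap branch as a dead end and therefore chooses the safe branch, reaching $t$ at cost $\epsilon+c$; the doubly sophisticated agent correctly foresees its future self completing the trap branch, and since this branch has the smaller perceived cost at $v$ ($C^\ast<bc$), it selects it and pays $\epsilon+C^\ast$. I would then verify the two ``start-from-$s$'' conditions, $R \geq b\epsilon+c$ and $R \geq b\epsilon+C^\ast$, both of which reduce to lower-order inequalities in $\delta$ and $\epsilon$. The optimal cost is $C_o = \epsilon+c$, and the gap in actual payoffs equals $(\epsilon+C^\ast)-(\epsilon+c) = C^\ast - c = (b-1)c - 2\delta$. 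Since $(b-1)C_o = (b-1)c + (b-1)\epsilon$, letting $\delta,\epsilon\to 0$ drives the gap to $(b-1)C_o$.

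The main obstacle is nothing deep but rather the simultaneous bookkeeping: four inequalities must hold together, namely $R<bC^\ast$ (sophisticated present-biased abandons at $v_1$), $R+\lambda\epsilon\geq bC^\ast$ (doubly sophisticated continues at $v_1$), $R\geq bc$ (sophisticated present-biased at $v$ prefers the safe branch to abandoning), and $C^\ast<bc$ (doubly sophisticated at $v$ prefers the trap to the safe branch). The parameterization $C^\ast = bc-2\delta$ and $R = bC^\ast-\lambda\epsilon/2$ is engineered so that each of these collapses to a condition of the form ``$\delta$ and $\epsilon$ small enough,'' leaving the final arithmetic routine.
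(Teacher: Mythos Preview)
Your construction is correct and the mechanism is essentially the same as the paper's: in both cases a tiny initial cost $\eps$ raises the doubly sophisticated agent's perceived reward just enough to unlock an expensive final edge that the sophisticated present-biased agent (with $\lambda=0$) would refuse, so the doubly sophisticated agent commits to the costly branch while the other agent takes a cheap alternative. The paper uses a slightly leaner four-vertex graph with branching directly at $s$ (edges $s\to v_1$ of cost $\eps$, $v_1\to t$ of cost $b$, versus $s\to v_2$ of cost $1$, $v_2\to t$ of cost $(b+1)\eps$, with $R=b^2-\lambda\eps$), whereas you insert a common edge $s\to v$ and a zero-cost edge $v\to v_1$ before the branch; this adds a vertex and a parameter but changes nothing conceptually.
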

\begin{proof}
	Consider the example in Figure \ref{fig:sing-better} with $R=b^2-\lambda \eps$. A doubly sophisticated
	agent knows that after traversing the edge $(s,v_1)$ its sunk cost would
	increase the perceived reward to $b^2$ and it will be able to traverse the
	edge $(v_1,t)$. Since when standing at $s$ the upper path has a lower
	perceived cost, the doubly sophisticated agent will choose it for a total cost
	of $b+\eps$. A
	sophisticated present-biased agent, on the other hand, knows that it won't be
	able to traverse the edge $(v_1,t)$ and thus chooses the lower path of total
	cost $1+(b+1)\eps$ instead. Intuitively, the sunk cost bias is allowing the
	doubly sophisticated agent to ``procrastinate'' more because it knows that if
	it puts in a small amount of work now, its future self won't abandon because
	the perceived reward will be higher.
\end{proof}
\begin{figure}[ht] \centering 
	\begin{tikzpicture}[->,shorten >=1pt,auto,node distance=2cm,
	thin] 
	\node (s) [circle, draw, minimum size=1cm] at (-3,0) {$s$}; 
	\node (v1) [circle, draw, minimum size=1cm] at (0,1) {$v_1$}; 
	\node (v2) [circle, draw, minimum size=1cm] at (0,-1) {$v_2$}; 
	\node (t) [circle, draw, minimum size=1cm] at (3,0) 	{$t$};
	
	\path (s) edge node {$\eps$} (v1) 
	(s) edge node {$1$} (v2)  
	(v1) edge node {$b$} (t) 
	(v2) edge node {$(b+1)\eps$} (t) ; 
	\end{tikzpicture}
	\caption{For $R=b^2-\lambda \eps$ a doubly sophisticated agent will take the upper path and a singly sophisticated agent will take the lower path } \label{fig:sing-better}
\end{figure}
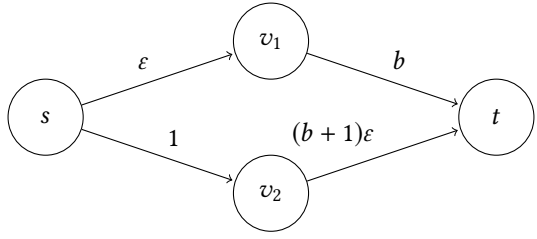

\begin{claim} 
	The payoff of a sophisticated present-biased agent can be smaller than the payoff of a
	doubly sophisticated agent by an additive amount arbitrarily close to $(b-1)C_o$. 
\end{claim}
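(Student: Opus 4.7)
The plan is to construct a single-path instance where sunk-cost bias gives the doubly sophisticated agent exactly enough additional perceived reward to complete the traversal, while the sophisticated present-biased agent, lacking this cushion, recognizes that its future self would abandon at the last step and therefore declines to start. This mirrors the health-club intuition from the introduction: the doubly sophisticated agent can use a prepaid sunk cost as a commitment device that an agent without sunk-cost bias cannot exploit.

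Specifically, I would take the graph on vertices $s, v, t$ with edges $(s,v)$ of cost $\eps$ and $(v,t)$ of cost $c$, where $c > 0$ is fixed and $\eps > 0$ is chosen small; set $R = bc - \lambda \eps$, so $C_o = \eps + c$. At $v$, the sophisticated present-biased agent compares perceived cost $bc$ with perceived reward $R < bc$ and abandons, whereas the doubly sophisticated agent has perceived reward $R + \lambda \eps = bc$, which meets its perceived cost, so it continues. Back at $s$, the sophisticated present-biased agent foresees abandonment at $v$ and refuses to start, obtaining payoff $0$, while the doubly sophisticated agent correctly predicts reaching $t$ and verifies that its perceived path cost $b\eps + c$ is at most $R$ -- a condition that reduces to $(b+\lambda)\eps \le (b-1)c$ and therefore holds for small $\eps$ (given $b > 1$ and $c > 0$).

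The resulting payoff gap is $R - C_o = (b-1)c - \eps(1+\lambda)$, while $(b-1) C_o = (b-1)(\eps + c)$; the discrepancy $\eps(b+\lambda)$ between these two quantities tends to $0$ as $\eps \to 0$, so the gap approaches $(b-1)C_o$ from below and can be made arbitrarily close to it.

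I do not anticipate a significant obstacle. The delicate point is calibrating $R$ just below $bc$ by exactly $\lambda \eps$, so that the sunk cost accumulated along the first edge is precisely what tips the doubly sophisticated agent's future self from abandonment to completion while leaving the sophisticated present-biased agent just shy of the threshold; once this calibration is in place, the rest of the argument is straightforward arithmetic and direct application of the definitions.
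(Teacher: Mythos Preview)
Your construction is correct and is essentially the same as the paper's: the paper uses the identical two-edge path $s \to v \to t$ with $c(s,v) = \eps$, $c(v,t) = b$, and $R = b^2 - \lambda\eps$, which is your instance specialized to $c = b$. The analysis and the resulting $(b+\lambda)\eps$ shortfall from $(b-1)C_o$ match as well.
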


\begin{proof} 
	Consider the example in Figure \ref{fig:doubly-vs-soph} with
	$R=b^2-\lambda \cdot \eps$. A sophisticated present-biased agent will not start
	traversing the graph as the perceived cost when standing at $v_1$ is greater
	than the reward. Hence, a sophisticated agent will incur a payoff of $0$. A
	doubly sophisticated agent, knows that because of sunk cost, the perceived
	reward when standing at $v_1$ will be sufficiently large to continue to the
	target. Thus, it will traverse the graph for a total payoff of $b^2-\lambda \cdot
	\eps - b -\eps = (b-1) \cdot (b+\eps) + (b-\lambda) \eps$.  
\end{proof}
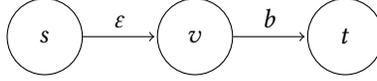
\begin{figure}[ht] 
	\centering 
	\begin{tikzpicture}[->,shorten >=1pt,auto,node distance=2cm, thin] 
	\node (s) [circle, draw, minimum size=1cm] at (0,0) {$s$}; 
	\node (v) [circle, draw, minimum size=1cm] at (2,0) {$v$}; 
	\node (t) [circle, draw, minimum size=1cm] at (4,0) {$t$};
	
	\path (s) edge node {$\eps$} (v) 
	(v) edge node {$b$} (t) ; 
	\end{tikzpicture} 
	\caption{For $R=b^2-\lambda \cdot \eps$, a
		sophisticated agent wouldn't traverse the graph but a doubly sophisticated agent would.}
	\label{fig:doubly-vs-soph}
\end{figure}

\subsection{Doubly-Sophisticated Path Computation is NP-Hard} \label{sec-doubly-np}
We will show that the problem of determining whether there exists a path that a doubly sophisticated agent will traverse is NP-hard. Formally, we show
\begin{theorem} \label{thm:doubly-hard}
This problem is NP-hard: Given a graph $G$, a reward $R$, present bias
$b$, and sunk cost bias $\lambda$, determine whether there exists a
path that a doubly sophisticated agent can take from $s$ to $t$.
\end{theorem}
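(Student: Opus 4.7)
The plan is to reduce from Subset Sum. Given an instance with positive integers $a_1, \ldots, a_n$ (encoded in binary) and target $T$, I would construct a layered graph consisting of a \emph{choice region} followed by a small \emph{gate gadget}. The choice region has nodes $v_0 = s, v_1, \ldots, v_n$, with two parallel edges of costs $a_i$ and $0$ between $v_{i-1}$ and $v_i$ representing the decision to ``include'' or ``exclude'' element $i$. The sunk cost accumulated when the agent reaches $v_n$ is then exactly $\sum_{i \in S} a_i$, where $S$ encodes the agent's include-choices.

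The parameters $R, b, \lambda$ and the edges of the gate gadget would be chosen so that a doubly sophisticated agent can traverse from $v_n$ to $t$ iff the sunk cost at $v_n$ is exactly $T$. A one-sided threshold (continue iff $s \geq \tau$) follows directly from the inequality $R + \lambda s \geq b \cdot \text{(remaining perceived cost)}$, so I would enforce exact equality using a gate with two alternative subpaths: one becoming feasible only when $s \geq T$, and a second that requires a later sophisticated-agent decision whose feasibility flips in the opposite direction at $T$ (so that the path becomes infeasible once $s > T$). The intersection pins the required sunk cost to exactly $T$, so a traversable path exists iff some subset of the $a_i$'s sums to $T$. Because the edge costs can be written in $O(\mathrm{poly}(n, \log \max a_i))$ bits, this reduction runs in polynomial time.

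The main obstacle is ensuring that the agent's deterministic backward-induction at $s$ faithfully realizes this correspondence. Since the agent picks the successor minimizing its perceived cost, a naive construction would always route through the cost-$0$ ``exclude'' edges and trivialize the subset-sum search. To prevent this, the gate gadget must be arranged so that at each $v_{i-1}$, for any particular accumulated sunk cost $s'$, an edge leads to $t$ only if its continuation will not be abandoned downstream: cost-continuations that are infeasible become $\infty$ (because the agent plans a path only if it commits to reaching $t$, per Section~\ref{SEC:DOUBLY-SOPH}). This converts each local perceived-cost minimization into a forced choice dictated entirely by which sunk-cost values at $v_n$ admit a feasible gate traversal. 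Verifying correctness then amounts to checking that the set of sunk cost values at $v_n$ for which any $v_n$-to-$t$ continuation is feasible is exactly $\{T\}$, and that this feasibility propagates back through the layers iff a subset sums to $T$. The delicate piece of the proof is designing the gate so that exact matching, rather than a one-sided threshold, is what the sophisticated agent's reasoning enforces.
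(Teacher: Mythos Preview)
Your high-level plan matches the paper's: both reduce from Subset Sum via a ``choice region'' whose accumulated sunk cost encodes a subset, followed by a mechanism that forces that sunk cost to equal $T$. But two of your technical ingredients do not work as stated.

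\textbf{The upper-bound gate cannot exist.} You propose two parallel subpaths, one feasible iff $s \geq T$ and one whose ``feasibility flips in the opposite direction at $T$,'' with the ``intersection'' pinning $s = T$. First, parallel subpaths give a \emph{union} of feasible sunk-cost sets, not an intersection. More fundamentally, no fixed-cost gadget can be infeasible for all sufficiently large sunk cost: the perceived reward $R + \lambda s$ grows without bound in $s$, so any path of bounded total cost eventually becomes traversable. A sophisticated ``detour'' that becomes active at high $s$ only raises the cost by a fixed amount, creating infeasibility on a bounded interval of $s$ values, not on all of $(T,\infty)$. The paper obtains the upper bound by an entirely different mechanism: the constraint is enforced \emph{at $s$}, where the sunk cost is zero and the perceived reward is exactly $R$. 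With a final edge of cost $T$ and $R = 2T + \lambda - \varepsilon$, any path the agent is willing to start must have total cost $K_{n+1} + T \le R$, forcing $K_{n+1} \le T$ by integrality. The lower bound then comes from the single final edge (needing $bT \le R + \lambda K_{n+1}$, hence $K_{n+1} \ge T$ with $b = 2 + \lambda$).

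\textbf{Single include-edges of cost $a_i$ cause abandonment.} With the parameters above, an agent at the first choice node with sunk cost $0$ who must include $a_i$ sees perceived cost $(b-1)a_i + 2T$ against reward roughly $2T$; since $(b-1)a_i \ge 1$, it abandons. The paper handles this by replacing each cost-$x_i$ edge with a geometric sequence of edges (first two of cost $1/(2b)$, then doubling) so that at every intermediate node the present-bias premium on the next edge is at most $\lambda$ times the sunk cost just accrued, and abandonment never occurs. You could alternatively try $b$ extremely close to $1$ (on the order of $1 + 1/\max_i a_i$), but you neither propose this nor verify that the two threshold constraints can still be satisfied simultaneously with such a $b$; the paper's doubling trick lets it fix $b = 2+\lambda$ independently of the instance.
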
 

We show that for any parameter $1/2 \leq \lambda<1$ there exists $b>0$ such that
 this problem is NP-hard by using a reduction from the Subset
Sum problem. Recall that in the Subset
Sum problem we ask, given a set $S$ of integers $x_1, \dots, x_n$ and a target $T$, is
there some subset of $S$ that adds up to $T$. Given an instance of the subset sum problem, we construct a graph $G$ as follows:
The vertices are $s, v_1, \dots, v_{n+1}, w_1, \dots, w_n, t$. There is an edge
of cost $0$ from $s$ to $v_1$ and an edge of cost $T$ from $v_{n+1}$ to $t$. For
each $i$ between 1 and $n$, there is an edge of cost 0 from $v_i$ to $w_i$ and
an edge of cost 0 from $w_i$ to $v_{i+1}$. Finally from each $v_i$ to $v_{i+1}$,
there is a sequence of vertices and edges such that the path from $v_i$ to
$v_{i+1}$ has total cost $x_i$. The first two edges on each sequence have cost
$\frac{1}{2b}$, and each subsequent edge has twice the cost of the previous
edge. This sequence ends when the total cost of the sequence is exactly $x_i$
(meaning that the last edge has at most twice the cost of the second-to-last
edge, and the sum of the costs of the edges is $x_i$). For example, if $x_i=4$ and $b=2.5$,
then the sequence of edges will be $\frac{1}{5}, \frac{1}{5},
\frac{2}{5}, \frac{4}{5}, \frac{8}{5}, \frac{4}{5}$.  The agent's present bias for $\lambda>0$ is $b=2+\lambda$ and the reward is $R = (b-\lambda) T + \lambda -\eps = 2T + \lambda -\eps$. 
A sketch of the reduction can
be found in Figure \ref{fig:reduction}

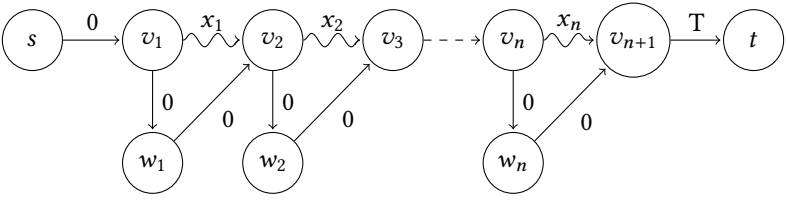
\begin{figure}[ht]
	\centering
	\begin{tikzpicture}[->,shorten >=1pt,auto,node distance=2cm, thin, scale=.8]
	\node (s) [circle, draw, minimum size=.8cm] at (-6,0) {$s$};
	\node (v1) [circle, draw, minimum size=.8cm] at (-4,0) {$v_1$};
	\node (v2) [circle, draw, minimum size=.8cm] at (-2,0) {$v_2$};
	\node (v3) [circle, draw, minimum size=.8cm] at (0,0) {$v_3$};
	\node (vn) [circle, draw, minimum size=.8cm] at (2,0) {$v_n$};
	\node (vn1) [circle, draw, minimum size=.8cm] at (4,0) {$v_{n+1}$};
	\node [below = .8cm of v1] (w1) [circle, draw, minimum size=.8cm] {$w_1$};
	\node [below = .8cm of v2] (w2) [circle, draw, minimum size=.8cm] {$w_2$};
	\node [below = .8cm of vn] (wn) [circle, draw, minimum size=.8cm] {$w_n$};
	\node (t) [circle, draw, minimum size=.8cm] at (6,0) {$t$};
	
	\path [->,decoration=snake]
	(v1) edge [decorate] node [above] {$x_1$} (v2)
	(v2) edge [decorate] node [above] {$x_2$} (v3)
	(vn) edge [decorate] node [above] {$x_n$} (vn1)
	;
	
	\path
	(v3) edge[dashed] node {} (vn)
	;
	\path
	(s) edge node {0} (v1)
	
	(v1) edge node {0} (w1)
	(v2) edge node {0} (w2)
	(vn) edge node {0} (wn)
	
	(w1) edge node [below right] {0} (v2)
	(w2) edge node [below right] {0} (v3)
	(wn) edge node [below right] {0} (vn1)
	
	(vn1) edge node {T} (t)
	;
	\end{tikzpicture}
	\caption{A sketch of the graph $G$ created by the reduction}
	\label{fig:reduction}
\end{figure}

It is not hard to show that for each $x_i$ the number of edges we create in $G$ is linear in $\log x_i$ the size of $G$ is polynomial in the size of the input. Formally, we prove the following claim:

\begin{claim} \label{clm:graphpoly}
The size of $G$ is polynomial in the size of the input.
\end{claim}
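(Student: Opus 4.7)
The plan is to bound the number of vertices, the number of edges, and the bit-complexity of edge costs separately, and observe that each of these is polynomial in the total bit-length of the Subset Sum instance $(x_1,\dots,x_n,T)$, which is $\Theta(\log T + \sum_i \log x_i)$.

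First I would count the ``skeleton'' vertices $s, v_1, \dots, v_{n+1}, w_1, \dots, w_n, t$, which contribute $2n+3$ vertices and $O(n)$ skeleton edges of cost $0$ or $T$. This part is trivially polynomial.

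The main work is in bounding, for each $i$, the length of the doubling sequence from $v_i$ to $v_{i+1}$ whose edge costs are $\tfrac{1}{2b}, \tfrac{1}{2b}, \tfrac{1}{b}, \tfrac{2}{b}, \tfrac{4}{b}, \dots$ with a possibly smaller final edge making the total equal to $x_i$. After the first two edges of cost $\tfrac{1}{2b}$, each subsequent doubling edge at least doubles the running total, so after $k$ doubling edges following the first two the partial sum is at least $\tfrac{1}{b}(2^{k-1})$ (geometric series). For the sequence to have total cost $x_i$ the number of edges $k_i$ must therefore satisfy $2^{k_i-O(1)}/b \le x_i$, giving $k_i = O(\log(b x_i)) = O(\log x_i + \log b)$. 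Since $\lambda \in [1/2, 1)$ is fixed and $b = 2+\lambda$ is a constant, this is $O(\log x_i)$. Summing over $i$, the total number of edges added by all doubling gadgets is $O\!\p{\sum_i \log x_i}$, which is linear in the input size.

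Finally I would verify that each edge cost is a rational whose numerator is an integer in $\{1,2,4,\dots\}$ bounded by $O(x_i)$ (or a ``remainder'' that is at most $x_i$) over a fixed denominator $2b$; hence each cost has $O(\log x_i)$ bits, and the reward $R = 2T+\lambda-\eps$ has $O(\log T)$ bits. Combining the vertex count, edge count, and per-edge bit-complexity yields an instance whose total description length is polynomial in $\log T + \sum_i \log x_i$, proving the claim. The only step that requires any care is the geometric-sum bound on $k_i$; everything else is bookkeeping.
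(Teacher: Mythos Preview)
Your proposal is correct and takes essentially the same approach as the paper: both arguments bound the length of each doubling gadget by $O(\log x_i)$ via a geometric-series estimate and sum over $i$. You are somewhat more thorough in that you also account for the skeleton vertices and the bit-complexity of the rational edge costs, whereas the paper only explicitly verifies the $O(\log x_i)$ edge count for each gadget.
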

\begin{proof}
	We will show that the number of edges between $v_i$ and $v_{i+1}$ is linear in $\log x_i$.
	We know that for $j>2$ the $j$th edge in the sequence has cost at most $\frac{1}{2b}
	\times 2^{j-2}$, and that the sum of the costs of the edges is $x_i$. Let
	$e_i$ be the number of edges between $v_i$ and $v_{i-1}$. Ignoring the first
	edge in the sequence, it is sufficient for $e_i$ to be large enough such that $\sum_{j=1}^{e_i} \frac{1}{2b} \times 2^{j-1} \ge x_i$. Note that,
	\begin{align*}
	\sum_{j=1}^{e_i} \frac{1}{2b} \times 2^{j-1}  = \frac{1}{2b} \sum_{j=0}^{e_i-1} 2^j  = \frac{1}{2b} \p{\frac{2^{e_i} - 1}{2 - 1}} \geq  \frac{1}{6} \p{2^{e_i} - 1}
		\end{align*}
 where the last transition is due to our choice of $b=2+ \lambda \leq 3$. Thus, we have that $	2^{e_i} \ge 6x_i + 1$ and hence the number of required edges between $v_i$ and $v_{i+1}$ is at most $log\p{6x_i + 1}$ and the size of $G$ is polynomial in
 the size of the input.
\end{proof}

In the next two claims we prove that the subset sum instance has a solution $\Longleftrightarrow$ there is
some path in $G$ that the doubly sophisticated agent will traverse for the
given reward.

\begin{claim} \label{clm:redforward}
	The subset sum instance has a solution $\Longrightarrow$ there is some path in
	$G$ that the doubly sophisticated agent will traverse for the given reward.
\end{claim}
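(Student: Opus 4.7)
The plan is to exhibit a specific path $P^{\ast}$ from $s$ to $t$ corresponding to the subset-sum witness and verify, by a backward induction along $P^{\ast}$, that a $(b,\lambda)$-doubly-sophisticated agent starting at $s$ will commit to traversing it. Let $S^{\ast} \subseteq \{1,\dots,n\}$ satisfy $\sum_{i \in S^{\ast}} x_i = T$, and define $P^{\ast}$ to go $s \to v_1 \to \cdots \to v_{n+1} \to t$, using the $x_i$-cost doubling sequence between $v_i$ and $v_{i+1}$ whenever $i \in S^{\ast}$ and the free $v_i \to w_i \to v_{i+1}$ detour otherwise. The total true cost of $P^{\ast}$ is exactly $\sum_{i \in S^{\ast}} x_i + T = 2T$, so the actual payoff of traversing $P^{\ast}$ is $R - 2T = \lambda - \varepsilon$, which is positive when $\varepsilon$ is chosen small enough.

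The bulk of the argument is the backward induction showing that at every node on $P^{\ast}$, equipped with the sunk cost accumulated on the $P^{\ast}$-prefix, the agent's perceived cost of completing $P^{\ast}$ is at most its perceived reward, so the $(b,\lambda)$-doubly-sophisticated self at that node continues along $P^{\ast}$. The terminal case at $v_{n+1}$ is direct: the sunk cost is exactly $T$, the perceived reward is $R + \lambda T = (2+\lambda)T + \lambda - \varepsilon = bT + \lambda - \varepsilon$, and this beats the perceived cost $bT$ of the final edge. For an internal node inside an $x_i$-sequence (with $i \in S^{\ast}$), write $c_1,\dots,c_m$ for the edge costs and consider the agent positioned just before edge $e_j$. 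The doubling pattern $c_1 = c_2 = \tfrac{1}{2b}$ and $c_{\ell+1} = 2 c_\ell$ yields the clean identity $c_1 + \cdots + c_{j-1} = c_j$ for $2 \le j \le m-1$; substituting this together with $b - 1 = 1 + \lambda$ into the continuation inequality causes almost all terms to telescope away, reducing the condition to $0 \le \lambda - \varepsilon$. The first edge of the sequence ($j=1$, prefix sum $0$) reduces to $\tfrac{1+\lambda}{2(2+\lambda)} \le \lambda - \varepsilon$, which holds for all $\lambda \ge 1/2$ (verifying $2\lambda^2 + 3\lambda - 1 \ge 0$) and small enough $\varepsilon$; the possibly-truncated last edge uses $c_1 + \cdots + c_{m-1} \ge c_m$ (which follows from $c_m \le 2 c_{m-1}$) to reduce again to $0 \le \lambda - \varepsilon$.

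It remains to verify the branch-point decisions. For $i \notin S^{\ast}$ the zero-cost $w_i$ detour is trivially the cheapest successor, matching $P^{\ast}$. For $i \in S^{\ast}$, the sophisticated agent compares the $x_i$-sequence against the $w_i$ detour: taking the detour would leave its future self at $v_{i+1}$ with cumulative sunk cost short by exactly $x_i$ relative to the $P^{\ast}$ schedule, and carrying this shortfall forward along the rest of $P^{\ast}$ would leave the final-edge perceived cost $bT$ strictly above the perceived reward at $v_{n+1}$, forcing abandonment. Anticipating this, the agent selects the $x_i$-sequence, so at $s$ the perceived cost of $P^{\ast}$ (which is $2T$ since the $s \to v_1$ edge is free) is below the perceived reward $R = 2T + \lambda - \varepsilon$, and the agent begins and completes the traversal. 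The principal obstacle is the inductive step inside the $x_i$-sequences: the doubling schedule starting at $\tfrac{1}{2b}$ together with the specific choice $b = 2+\lambda$ is precisely calibrated so that accumulated sunk cost keeps pace with the present-biased perception of the exponentially growing upcoming edges, and a careful accounting of the geometric sums is required to show the continuation inequality collapses to $0 \le \lambda - \varepsilon$. A secondary subtlety is the branch-point comparison for $i \in S^{\ast}$, which must rule out the possibility of the sophisticated agent exploiting an alternative subset to escape through the $w_i$ detour; this is handled cleanly once sequence continuation has been established.
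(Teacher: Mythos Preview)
The proposal has a genuine gap at the branch-point analysis for $i \in S^{\ast}$. Your argument there is that taking the $w_i$-detour and then ``carrying this shortfall forward along the rest of $P^{\ast}$'' would leave the agent at $v_{n+1}$ with too little sunk cost to cross the final $T$-edge. But a doubly sophisticated agent does not commit to following the rest of $P^{\ast}$ after a deviation: from $v_{i+1}$ with sunk cost $c_i$ (rather than $c_i + x_i$), it re-plans over \emph{all} continuations. If there is any subset $S' \subseteq \{i+1,\dots,n\}$ with $c_i + \sum_{j \in S'} x_j = T$ --- i.e., an alternative subset-sum solution agreeing with $S^{\ast}$ on indices below $i$ but omitting $i$ --- the agent will find it and reach $t$ through the detour at actual cost exactly $2T - c_i$. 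Since the $x_i$-sequence has perceived cost strictly greater than $2T - c_i$ from $v_i$ (the first edge already contributes an extra $(b-1)/(2b) > 0$), the agent would then strictly prefer the detour, and your induction along $P^{\ast}$ breaks. You flag this as a ``secondary subtlety'' that ``is handled cleanly once sequence continuation has been established,'' but sequence continuation says nothing about alternative subsets; this is the crux of the branch-point comparison, not a detail.

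The paper closes exactly this gap by choosing not an arbitrary witness but the subset-sum solution $I$ with lexicographically largest index set (largest minimum index, ties broken by second-smallest index, and so on). This canonical choice guarantees that for every $i \in I$ there is \emph{no} alternative solution agreeing with $I$ below $i$ and omitting $i$. Consequently, any path through $w_i$ that still reaches $t$ must arrive at $v_{n+1}$ with integer sunk cost at least $T+1$, so the detour option has perceived cost at least $2T - c_i + 1$, which loses to the sequence option's perceived cost of less than $2T - c_i + \tfrac{1}{2}$. With this one additional idea your backward induction (including your correct treatment of the doubling sequences) goes through verbatim; without it, the induction does not establish that the agent stays on $P^{\ast}$, and since your hypothesis only tracks $P^{\ast}$-sunk-costs, you have no quantitative bound on $\text{cost}[\cdot][\cdot]$ once the agent leaves $P^{\ast}$.
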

\begin{proof}
	Let $I \subseteq \{1, \dots, n\}$ be the solution to the subset sum instance,
	i.e. $\sum_{i \in I} x_i = T$. If there are multiple solutions, then let $I$
	be the one which has the largest minimum index (with ties broken by the
	second-smallest index, and so on). Then, consider a path through $G$ such that
	for $i \in I$, the agent takes the sequence of edges from $v_i$ to $v_{i+1}$
	for a cost of $x_i$, and for $j \notin I$, the agent takes the edges $(v_i,
	w_i)$ and $(w_i, v_{i+1})$. Then, when the agent gets to $v_{n+1}$, it will
	have incurred a total cost of $T$, so when it takes the last edge from
	$v_{n+1}$ to $t$, the final incurred cost will be $2T$. We will show that if
	the agent follows this path, at every vertex that the agent passes through,
	the perceived cost will always be smaller than $R$, meaning this is a valid
	path. Finally, we will show that the agent cannot take any other path.
	
	Call the path described above $P$. We proceed by induction, proving the claim
	that if the agent reaches $v_i$ along $P$, then the agent will reach $t$ along
	$P$.
	
	\noindent \textbf{Base case:} $i=n+1$. If the agent reaches $v_{n+1}$ along $P$, it will
	have incurred a total cost of $T$. The only outgoing edge from $v_{n+1}$ goes
	to $t$ for a cost of $T$, so the perceived cost of taking that edge is $b
	\cdot T$ which is smaller than the perceived reward of $ (b-\lambda) T + \lambda - \eps + \lambda T = b\cdot T + \lambda - \eps$. 
	 Thus, if the agent reaches $v_{n+1}$ along
	$P$, it will reach $t$ along $P$. \\
	\textbf{Inductive hypothesis:} If the agent reaches $v_i$ along $P$, it will
	continue along $P$ to $t$ without abandoning. \\
	\textbf{Inductive step:} Assume that the agent reaches $v_{i-1}$ along $P$.
	Then, in order to prove the claim, we must show that it proceeds from
	$v_{i-1}$ to $v_i$ along $P$, at which point we can use induction to prove
	that from $v_i$, it continues to $t$ along $P$. Since the agent has reached
	$v_{i-1}$ along $P$, we know that at this point, it has incurred a cost of
	$K_{i-1} = \sum_{j \in I, j < i-1} x_j$. We consider two cases: \\
	\textbf{Case 1:} $i-1 \in I$. In this case, we must show that the agent takes
	the sequence of edges from $v_{i-1}$ to $v_i$. Assume towards contradiction
	that the agent takes the $v_{i-1} \to w_{i-1} \to v_i$ path instead. In order
	for $w_{i-1}$ not to be abandoned for this incoming cost, there must be some
	path that reaches $t$ from $w_{i-1}$ with a sunk cost of $K_{i-1}$ without
	abandoning. In order for the agent to take the edge from
	$v_{n+1}$ to $t$, the total cost of the path from $s$ to $v_{n+1}$ should be at least $T$. 

Thus, any valid path from $w_{i-1}$ to $v_{n+1}$ must have a total
	cost of at least $T - K_{i-1}$. However, if this path does have a total cost
	of exactly $T - K_{i-1}$, then following it would yield a valid solution to the subset
	sum instance. Moreover, this solution $I'$ would have a larger minimum index
	than our original solution $I$, as $x_{i-1} \in I$ and $x_{i-1} \notin I'$, and
	for all $j < i-1$, either $j$ is in both $I$ and $I'$ or it is in neither.
	Thus, by contradiction, the path from $w_{i-1}$ to $v_{n+1}$ must have a total cost
	strictly greater than $T - K_{i-1}$. As all edge costs are integers this implies that the path should have a cost of at least $T - K_{i-1} + 1$. As a result the perceived cost of completing this path by going from $v_{n+1}$ to $t$ would is at least $2T - K_{i-1} + 1$. On the other hand the perceived cost of going from $v_{i-1}$ to $v_i$ is 
					\begin{align*}
			b \cdot \frac{1}{2b}  + 2T - K_{i-1} - \frac{1}{2b} 
			&< \frac{1}{2} + 2T - K_{i-1}  
			< 2T - K_{i-1} + 1 
		\end{align*}

   This is because by induction,
  if the agent begins on the path from $v_{i-1}$ to $v_i$, it will follow $P$ from
  $v_i$ to $t$, meaning the total remaining cost is $2T - K_{i-1} - 1/(2b)$.
  Furthermore, the agent will not abandon at $v_{i-1}$, as 
  the perceived cost is
  less than the reward for  $\lambda > 1/2$ and $\eps \le \frac{1}{2b}$ : $b \cdot \frac{1}{2b}  + 2T - K_{i-1} - \frac{1}{2b}  < \frac{1}{2}+2T  - \frac{1}{2b} < 2T+\lambda -\eps$.
	
	For any vertex between $v_i$ and $v_{i+1}$ along the sequence of edges, we
	must show that the perceived cost is no more than the perceived reward. Let $y$ be the
	cost of the edge leading out of some intermediate vertex along this sequence.
	Since the cost of each edge increases by a factor of two, the incurred cost so
	far along the sequence of edges is also $y$. This also means that after
	following this edge of cost $y$, the remaining cost is at most $2T - 2y$
	(because the agent has already incurred a cost of at least $y$, and the next
	edge also has a cost of $y$). Thus, the perceived cost is no more than $by + (2T - 2y) = \lambda y +2T$ while the perceived reward is at least $R+ \lambda y = 2T + \lambda -\eps + \lambda y$ meaning that the agent  does not abandon.

	\noindent
	\textbf{Case 2:} $i-1 \notin I$. As before, we know that any path from
	$v_{i-1}$ to $t$ must have a total cost of at least $2T - K_{i-1}$. 
	The perceived cost of going to $w_{i-1}$ is $2T - K_{i-1} $. The perceived cost of beginning the sequence of edges from $v_{i-1}$
	to $v_i$ is at least  	$b \cdot \frac{1}{2b}  + 2T -
	(K_{i-1} +\frac{1}{2b}) = (b-1) \frac{1}{2b} + 2T - K_{i-1}$ 
	(assuming $x_{i-1} \ge 1$). Therefore, the perceived cost of
	following $P$ is smaller than the perceived cost of following the
	sequence of edges from $v_{i-1}$ to $v_i$, so the agent chooses to go to
	$w_{i-1}$. Furthermore, the agent will not abandon at either $v_{i-1}$ or
	$w_{i-1}$ because at both vertices, the perceived cost is $2T - K_{i-1}
	 < R $.
	
	Since the claim holds in both cases, the induction holds. Thus, if the agent
	follows $P$ to $v_1$, it must continue to follow $P$ until it reaches $t$
	without abandoning. However, the only way to reach $v_1$ from $s$ is along
	$P$. Thus, all that remains to be shown is that the agent will not abandon at
	$s$. However, this must be the case, as the total cost of the path is $2T$ and
	the $(s,v_1)$ edge has cost 0, so the perceived cost at $s$ is $2T < R$.
\end{proof}

\begin{claim} \label{clm:redreverse}
	The subset sum instance has a solution $\Longleftarrow$ there is some path in
	$G$ that the doubly sophisticated agent will traverse for the given reward.
\end{claim}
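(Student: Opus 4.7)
The plan is to assume a doubly sophisticated agent traverses some path $P$ from $s$ to $t$, and to extract a valid subset sum solution from $P$.

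First, I would pin down the combinatorial form of $P$. The structure of $G$ forces any $s$-to-$t$ path to visit $v_1, v_2, \ldots, v_{n+1}$ in order; between consecutive $v_i$ and $v_{i+1}$, the path either uses the length-two short route through $w_i$ (total cost $0$) or the entire long sequence (total cost $x_i$). Since the agent actually reaches $t$, it cannot have abandoned mid-sequence, so each stage is a clean binary choice. Let $I \subseteq \{1,\ldots,n\}$ be the indices where the long sequence is taken, and set $K = \sum_{i \in I} x_i$; then $K$ is a nonnegative integer and the total cost of $P$ is $K + T$.

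Second, I would derive the lower bound $K \ge T$ from the agent's decision at $v_{n+1}$. With sunk cost $K$ there, the perceived reward is $R + \lambda K$ and the perceived cost of the final edge is $bT$; for the agent to take this edge, $bT \le R + \lambda K$. Substituting $b = 2+\lambda$ and $R = 2T + \lambda - \eps$ and rearranging yields $K \ge T - 1 + \eps/\lambda$. Since $\eps/\lambda > 0$ and $K$ is an integer, this forces $K \ge T$.

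Third, I would match this with $K \le T$ by examining the start at $s$. Because the agent is doubly sophisticated, it correctly anticipates its traversal of $P$, so its perceived cost at $s$ equals the true total cost of $P$ (the edge $(s,v_1)$ has cost $0$, so present bias contributes nothing). For the agent not to abandon at $s$, this perceived cost $K + T$ must be at most $R = 2T + \lambda - \eps$, yielding $K \le T + \lambda - \eps$. Since $\lambda < 1$ and $K$ is an integer, $K \le T$.

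Combining the two bounds gives $K = T$, so $I$ is a solution to the subset sum instance. There is no substantial obstacle in this direction; the whole argument rests on the integrality of $K$ together with the careful tuning of $R$ so that both the final-edge decision at $v_{n+1}$ and the no-abandon decision at $s$ leave a window of length strictly less than one for $K$, squeezing it onto the single integer $T$.
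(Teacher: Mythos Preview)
Your proposal is correct and follows essentially the same approach as the paper: use the non-abandon condition at $s$ (perceived cost $K+T \le R$) together with integrality to get $K \le T$, use the willingness-to-take-the-final-edge condition at $v_{n+1}$ ($bT \le R + \lambda K$) together with integrality to get $K \ge T$, and conclude $K=T$. Your write-up is slightly more explicit than the paper's in justifying why the perceived cost at $s$ equals the true path cost (the $(s,v_1)$ edge has cost $0$) and why each stage is a clean binary choice, but the argument is the same.
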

\begin{proof}
	Let $P$ be the path that the agent traverses. Since $P$ must pass through
	$v_{n+1}$, let $K_{n+1}$ be the sunk cost when the agent reaches $v_{n+1}$.
	The total cost of the path is $K_{n+1} + T$, so the perceived cost for the
	agent at $s$ is $K_{n+1} + T$. In order for the agent to be willing to begin
	along $P$, the perceived cost must be no more than the reward, so $K_{n+1} + T \le 2T + \lambda - \epsilon$. Because $K_{n+1}$ must be an integer this implies that $K_{n+1} \le T$.

	When the agent reaches $v_{n+1}$, in order for it to be willing to take the
	$(v_{n+1},t)$ edge the perceived cost at that point must also be no more than
	the reward, so

	\begin{align*}
b \cdot T  &\le 2T + \lambda - \epsilon + \lambda K_{n+1} 
\end{align*}
By plugging in $b=2+\lambda$ and rearranging we get that $T - 1 + \epsilon/\lambda \le K_{n+1}$ and because $K_{n+1}$ must be an integer we conclude that $T \le K_{n+1}$.

	Since we have both $K_{n+1} \le T$ and $K_{n+1} \ge T$, $K_{n+1} = T$. Then,
	we can construct a solution to the subset sum instance by letting $I = \{i ~ |
	~ P \text{ does not include $w_i$}\}$. $\sum_{i \in I} x_i = K_{n+1} = T$
	because $i \in I$ iff $P$ follows the sequence of edges of total cost $x_i$
	from $v_i$ to $v_{i+1}$, and all other edges in $P$ have cost 0. Thus, $I$ is
	a solution to the subset sum instance.
\end{proof}

\section{Doubly Naive Agents} \label{SEC:DOUBLY-NAIVE}
Consider an agent which has both present bias and sunk cost bias, and is naive
about both. We know that a naive present-biased agent might abandon partway
through a task. This is still the case for a doubly-naive agent, but in contrast
to a naive present-biased agent (whose cost is bounded by $R$), because the
perceived reward keeps increasing the doubly-naive agent can actually incur an
arbitrary amount of cost along the way. We first provide a bound on the cost a
doubly naive agent incurs and then show this upper bound is almost tight.
Note that since the payoff of an optimal agent is at most $R$ the claim
also establishes an asymptotic exponential additive gap between the payoff of a
doubly naive agent and an optimal agent.

\begin{claim}
	  A $(b,\lambda)$-doubly naive agent with $b>1$ and $\lambda >0$ traversing any graph $G$ on $n$ nodes incurs a cost of at most
	$O\p{R\p{1 + \frac{\lambda}{b}}^n}$.
\end{claim}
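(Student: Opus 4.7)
The plan is to track the growth of the sunk cost along the doubly naive agent's trajectory and show it grows at most geometrically in the number of edges traversed, with rate $1 + \lambda/b$ per step. Let $u_0 = s, u_1, u_2, \ldots$ denote the sequence of nodes the agent actually visits; since $G$ is a DAG this sequence has length at most $n$. Let $c_k$ be the cost of the edge $(u_{k-1},u_k)$ and let $s_k = \sum_{i=1}^k c_i$ be the sunk cost after $k$ steps.

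The key step is to extract a per-edge upper bound on $c_{k+1}$ in terms of $s_k$. When the agent at $u_k$ chooses to traverse $(u_k, u_{k+1})$ rather than abandon, its perceived cost for continuing must not exceed its perceived reward $R + \lambda s_k$. Because the agent is naive about both biases, at $u_k$ it believes its future self starting from $u_{k+1}$ will simply pay the unbiased optimal continuation $C_o(u_{k+1})$. Hence its perceived continuation cost is $b\cdot c_{k+1} + C_o(u_{k+1})$, and the decision to move requires
\[
b\cdot c_{k+1} + C_o(u_{k+1}) \;\le\; R + \lambda s_k.
\]
Dropping the nonnegative term $C_o(u_{k+1})$ gives $c_{k+1} \le (R + \lambda s_k)/b$, and therefore
\[
s_{k+1} \;=\; s_k + c_{k+1} \;\le\; \p{1 + \tfrac{\lambda}{b}} s_k + \tfrac{R}{b}.
\]

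Next I would solve this linear recurrence by the standard shift trick: setting $A = R/\lambda$ transforms it into $s_{k+1} + A \le (1 + \lambda/b)(s_k + A)$. Since $s_0 = 0$, iterating yields
\[
s_k \;\le\; \tfrac{R}{\lambda}\p{\p{1 + \tfrac{\lambda}{b}}^k - 1}.
\]
Because any simple path in the DAG uses at most $n-1$ edges, the total cost incurred by the agent is bounded by $s_n$, which is $O\!\p{R(1 + \lambda/b)^n}$ once $\lambda$ is treated as a constant of the agent.

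There is no real obstacle here; the one thing to be careful about is the naivete assumption, which is precisely what lets us replace the agent's expected future cost by $C_o(u_{k+1}) \ge 0$ and get the clean inequality $b\cdot c_{k+1} \le R + \lambda s_k$. Everything else is a straightforward linear recurrence, and the case where the agent abandons at some $u_k$ only shortens the traversal and so preserves the bound.
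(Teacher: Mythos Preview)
Your proof is correct and follows essentially the same approach as the paper: both track the sunk cost (equivalently, the perceived reward $R + \lambda s_k$) and use the step inequality $b\,c_{k+1} \le R + \lambda s_k$ to obtain the geometric growth rate $1+\lambda/b$, then unwind the recurrence. Your version is slightly more explicit in justifying that inequality via the dropped nonnegative term $C_o(u_{k+1})$, but the structure and bound are identical.
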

\begin{proof}
	Consider the path that the doubly naive agent takes through 
	$G$. Let $R_i$ denote the perceived reward when the agent is at node $i$ and
	let $z_i$ denote the cost of the edge that the agent takes leaving $i$. In
	order for the agent to continue at node $i$, it must be the case that the
	perceived cost is no more than $R_i$, so $bz_i \le R_i$. Since $R_{i+1} = R_i
	+ \lambda z_i$, it must be that $R_{i+1} \le R_i (1 + \lambda/b)$. Thus, if
  $R_n$ is the perceived reward when the agent has reached the target, $R_n \le
  R_0(1+\lambda/b)^n$ where $R_0 = R$. Since $R_n - R_0 = \lambda \sum_{i=1}^n
  z_n$, we have $\sum_{i=1}^n z_n \le R((1 + \lambda/b)^n - 1)/\lambda =
  O(R(1+\lambda/b)^n)$.
\end{proof}

Next, we show that the above bound is nearly tight.

\begin{claim} \label{clm:dn-lower-bound}
	There exists a fan type graph on $n$ nodes (Figure \ref{fig:doubly_naive_exp}) in which
  a doubly naive agent with $b>1$ and $\lambda >0$ traversing $G$ incurs cost
  $\Theta\p{R\p{\frac{b(b+\lambda)}{b^2 + \lambda}}^n}$.
\end{claim}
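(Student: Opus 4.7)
The plan is to construct a ``fan'' graph consisting of a spine $s = v_0 \to v_1 \to \cdots \to v_{n-1}$ with edge costs $z_i$ on $(v_i,v_{i+1})$, together with a direct ``escape'' edge $(v_i, t)$ of cost $a_i$ from each spine vertex to the target. The graph drives the doubly naive agent to exponentially large cost by letting it fool itself at every step: at $v_i$, the agent (being naive) predicts that its future self will take the cheap escape edge from $v_{i+1}$, so it believes only one expensive edge remains; meanwhile, the sunk cost it accumulates keeps inflating the perceived reward, which in turn lets the next escape cost be made even larger.

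I will choose the costs so that at every $v_i$ the two inequalities governing the agent's behavior are essentially tight: (i) the perceived cost of continuing equals the perceived cost of escaping, i.e.\ $b z_i + a_{i+1} = b a_i$ (so that, after a negligible perturbation, continuing is strictly preferred), and (ii) that common perceived cost equals the current perceived reward, $b a_i = R + \lambda c_i$, where $c_i = z_0 + \cdots + z_{i-1}$ is the sunk cost at $v_i$. The first relation gives $z_i = a_i - a_{i+1}/b$; substituting into $b a_{i+1} = R + \lambda c_{i+1} = b a_i + \lambda z_i$ and solving for $a_{i+1}$ in terms of $a_i$ yields the one-step recursion
\begin{equation*}
a_{i+1} \;=\; a_i \cdot \frac{b(b+\lambda)}{b^2+\lambda} \;\equiv\; a_i \cdot r.
\end{equation*}
Since $b>1$ forces $r>1$, the escape costs $a_i = a_0 r^i$ grow geometrically, and so do the spine costs $z_i = a_i \cdot b(b-1)/(b^2+\lambda)$. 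Choosing $a_0 = R/b$ (so the agent is just willing to start from $s$), the total cost the agent incurs in walking the spine and finally taking the escape at $v_{n-1}$ is $c_{n-1} + a_{n-1} = \Theta(R r^n)$.

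The main obstacle is verifying self-consistency of the naive agent's predictions: for its belief that its future self will escape directly from $v_{i+1}$ to be internally valid, the unbiased optimal path from every $v_{i+1}$ must in fact be the escape edge rather than a longer walk down the spine. This reduces to checking $a_{i+1} < z_{i+1} + a_{i+2}$, which using $z_{i+1} = a_{i+1} - a_{i+2}/b$ simplifies to $0 < a_{i+2}(b-1)/b$ and so holds for $b>1$. A small perturbation of the tight equalities then makes the agent strictly prefer continuing over both escaping and abandoning at every $v_i$, so the agent indeed walks the entire spine before escaping, for a total incurred cost of $\Theta\!\left(R\left(\tfrac{b(b+\lambda)}{b^2+\lambda}\right)^n\right)$ as claimed.
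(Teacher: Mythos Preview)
Your proposal is correct and follows essentially the same construction as the paper: the paper uses the identical fan graph with escape costs $y_i = y_0 r^i$ (your $a_i$), spine costs $x_{i+1} = y_i \cdot b(b-1)/(b^2+\lambda)$ (your $z_i$), and $R = by_0$, derived from the same pair of tight equalities $bx_{i+1} + y_{i+1} = by_i$ and $R + \lambda\sum_{j\le i} x_j = by_i$, and then breaks ties in favor of continuing rather than perturbing. Your explicit verification that the direct escape is the true shortest path from each $v_{i+1}$ (so the naive agent's prediction is internally consistent) is a detail the paper leaves implicit.
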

\begin{proof}
  The full proof\full{is deferred to Appendix \ref{app:doubly-naive}.}\camera{can be found in the full version.}
Consider the instance in Figure
\ref{fig:doubly_naive_exp} with the following definitions:
\begin{align*}
  x_i = y_0 \frac{b(b-1)}{b^2 + \lambda} \p{\frac{b(b+\lambda)}{b^2 +
  \lambda}}^{i-1}; &&
  y_i = y_0 \p{\frac{b(b+\lambda)}{b^2 + \lambda}}^i; &&
  R = b y_0
\end{align*}
Note that the costs are increasing exponentially with base
$b(b+\lambda)/(b^2+\lambda)$. Because the perceived reward is increasing as the
agent traverses the graph, it will be willing to continue along the outer edge
of the fan, incurring total cost exponential in the size of the graph.
\end{proof}
	
\begin{figure}[ht]
  \centering
  \begin{tikzpicture}[->,shorten >=1pt,auto,node distance=2cm, thin, scale=.5]
    \node (0) [circle, draw, minimum size=.8cm] at (0,0) {$s$};
    \node (1) [circle, draw, minimum size=.8cm] at (1,3) {$v_1$};
    \node (2) [circle, draw, minimum size=.8cm] at (5,5) {$v_2$};
    \node (3) [circle, draw, minimum size=.8cm] at (9,3) {$v_3$};
    \node (4) [circle, draw, minimum size=.8cm] at (10,0) {$v_4$};
    \node (5) [circle, draw, minimum size=.8cm] at (5,0) {$t$};
    
    \path
    (0) edge node {$x_1$} (1)
    (1) edge node {$x_2$} (2)
    (2) edge node {$x_3$} (3)
    (3) edge node {$x_4$} (4)

    (0) edge node {$y_0$} (5)
    (1) edge node {$y_1$} (5)
    (2) edge node {$y_2$} (5)
    (3) edge node {$y_3$} (5)
    (4) edge node {$y_4$} (5)
    ;
  \end{tikzpicture}
  \caption{Graph for which doubly naive agent incurs exponential cost}
  \label{fig:doubly_naive_exp}
\end{figure}
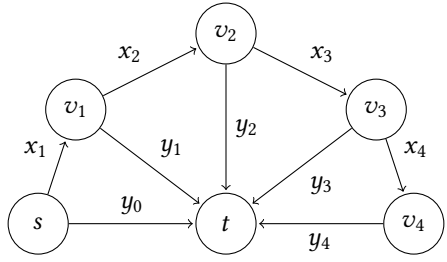

It is not always the case that a doubly naive agent does worse than a naive
present-biased agent. As we will see next, sometimes sunk cost bias may
actually help the agent reach the goal and achieve a positive payoff. Next, we bound the possible gain due to sunk cost bias and prove the following claim:
	
\begin{claim} \label{clm:naive-vs-doubly}
	The payoff of a doubly naive agent can be greater by $R(1-\frac{1}{b}))$ than the payoff of a
  naive present-biased agent. The bound is tight.
\end{claim}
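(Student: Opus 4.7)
The plan is to prove the claim in two parts: exhibit an instance in which the gap approaches $R(1-1/b)$, and show by case analysis that no instance can exceed this gap. The key conceptual hook is that both agents are naive about the future, so at every vertex $v$ each selects the next edge by minimizing the same quantity $bc(v,w') + C_o(w')$, where $C_o$ denotes the true optimal remaining cost. With a consistent tie-breaking rule this means the two agents traverse the same sequence of edges until one of them abandons. Their only difference is the abandonment threshold: the doubly-naive agent's perceived reward $R + \lambda \cdot (\text{sunk cost})$ always weakly dominates the naive present-biased agent's reward $R$, so the doubly-naive agent never abandons strictly before the naive present-biased agent.

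For the achievability half, I would use a two-edge graph $s \to v \to t$ with $c(s,v) = c_1$ and $c(v,t) = R/b + \eps$ for small $\eps > 0$, choosing $c_1$ so that $b\eps/\lambda \le c_1 \le (R(1-1/b) - \eps)/b$. At $s$ both agents perceive total cost $bc_1 + R/b + \eps \le R$ and begin. At $v$ the naive present-biased agent sees perceived cost $b(R/b+\eps) > R$ and abandons, for payoff $-c_1$, while the doubly-naive agent's perceived reward $R + \lambda c_1 \ge R + b\eps$ covers the perceived cost and it continues to $t$, for payoff $R - c_1 - R/b - \eps$. The resulting gap is $R(1-1/b) - \eps$, which tends to $R(1-1/b)$ as $\eps \to 0$.

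For the upper bound, there are three cases after the common-prefix observation. If the naive present-biased agent reaches $t$, so does the doubly-naive agent on the same path, and the gap is $0$. If both abandon, the doubly-naive agent does so at or after the naive present-biased agent and hence with at least as large a sunk cost, so the gap is nonpositive. The remaining case is where the naive present-biased agent abandons at some vertex $v$ with sunk cost $C_v$ but the doubly-naive agent continues to $t$. Letting $w$ be the common choice of next vertex at $v$, abandonment gives $bc(v,w) + C_o(w) > R$, so $c(v,w) > (R - C_o(w))/b$. Since the doubly-naive agent reaches $t$ via $w$, its realized cost from $v$ to $t$ is at least $c(v,w) + C_o(w) > R/b + (1-1/b)C_o(w) \ge R/b$, so the gap in payoffs, $R$ minus the remaining cost from $v$, is strictly less than $R(1-1/b)$.

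The main subtlety is in this last case: one must use the (immediate but essential) fact that the doubly-naive agent's realized cost from $w$ onwards is at least $C_o(w)$, and combine it with the naive present-biased agent's abandonment inequality in the right way to extract the lower bound of $R/b$ on the cost incurred after $v$. The rest follows from the observation that naive agents make identical local edge-choices, so the path-level analysis collapses to a single-vertex comparison of abandonment thresholds.
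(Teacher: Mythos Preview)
Your proposal is correct and follows essentially the same approach as the paper: both arguments rest on the observation that the two naive agents make identical local edge-choices until the naive present-biased agent abandons, then use the abandonment inequality $b\,c(v,w) + C_o(w) > R$ to lower-bound the doubly-naive agent's remaining cost by $R/b$, and both establish tightness via a two-edge path $s \to v \to t$. Your treatment is slightly more thorough in explicitly separating the three cases and in stating the constraints on the first edge cost, but the substance is the same.
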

\begin{proof}
	Recall that we would like to prove that the payoff of a doubly naive agent can be greater by $R(1-\frac{1}{b}))$ than the payoff of a naive present-biased agent.
	Observe that a naive present-biased agent and a doubly naive agent will take
	the exact same path till the point that the naive present-biased agent
	abandons. Denote the node at which this happens by $v_i$ and the node that the
	doubly naive agent will continue to by $v_{i+1}$. Since the naive
	present-biased agent abandons we have that $b \cdot c(v_i,v_{i+1})+C_o(v_{i+1})
	> R$ (where $C_o(v_i)$ is the cost of the optimal agent for reaching $t$ from node $v_i$). This in particular implies that $c(v_i,v_{i+1})+C_o(v_{i+1}) >
	\frac{R}{b}$. As $c(v_i,v_{i+1})+C_o(v_{i+1}) $ is a lower bound on the cost of
	the doubly naive agent for getting from $v_{i+1}$ to $t$. We get that the gap
	between the two agents is at most $R-\frac{R}{b}$, since from $v_i$ the naive
	present-biased agent gets a payoff of 0, and the doubly naive agent gets a
	payoff of at most $R - c(v_i, v_{i+1}) - C_o(v_{i+1}) < R - \frac{R}{b}$.
	
	The instance in Figure \ref{fig:doubly-naive} illustrates this bound is tight. Observe that a naive present-biased agent will stop traversing the graph at node $v$ as $b \cdot c < R$. A doubly naive agent will get to $t$ for a total payoff of $R-\frac{R+ \lambda \cdot \eps}{b}-\eps = R(1-\frac{1}{b}) - (1+\lambda/b)\eps $. 
	
	\begin{figure}[ht] 
		\centering 
		\begin{tikzpicture}[->,shorten >=1pt,auto,node distance=2cm, thin] 
		\node (s) [circle, draw, minimum size=1cm] at (0,0) {$s$}; 
		\node (v) [circle, draw, minimum size=1cm] at (2,0) {$v$}; 
		\node (t) [circle, draw, minimum size=1cm] at (4,0) {$t$};
		
		\path (s) edge node {$\eps$} (v) 
		(v) edge node {$\frac{R+ \lambda \cdot \eps}{b}$} (t) ; 
		\end{tikzpicture} 
		\caption{For any $\lambda>0$ and $R>0$ a doubly naive agent will traverse the graph but a naive agent will abandon at $v$.}
		\label{fig:doubly-naive}
	\end{figure}
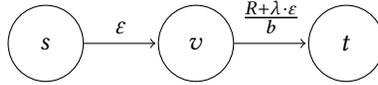
\end{proof}

\section{Singly Sophisticated Agents} \label{SEC:SINGLY-SOPH}
A singly sophisticated agent is an agent who is sophisticated about its present
bias but is naive about its sunk cost bias. In contrast to sophisticated
present-biased agents, singly sophisticated agents can't plan their whole path
ahead of time, as their beliefs about the reward are time-inconsistent. Such an
agent, for example, can abandon a task, as demonstrated in Figure
\ref{fig:sing-abandons}. In this example a singly sophisticated agent standing
at $s$ will plan to follow the path $s \rightarrow u \rightarrow v \rightarrow
t$ as it believes that since the reward is only $11$, its future self at $v$
will go straight to $t$ instead of taking the $v \to w \to t$ path. However,
because of sunk cost bias, once it reaches $u$ its perceived reward is increased
to $12$. As a result, it now believes that its future self at $v$ will take the
path $v \rightarrow w \rightarrow t$. When standing at $u$, the perceived cost of
this path is $13>12$, and hence the agent abandons.

	\begin{figure}[ht] \centering 
	\begin{tikzpicture}[->,shorten >=1pt,auto,node distance=2cm,
	thin, scale=.8] 
	\node (s) [circle, draw, minimum size=.8cm] at (-2,0) {$s$}; 
	\node (u) [circle, draw, minimum size=.8cm] at (0,0) {$u$}; 
	\node (v) [circle, draw, minimum size=.8cm] at (2,0) {$v$}; 
	\node (w) [circle, draw, minimum size=.8cm] at (3,-1.5) {$w$}; 
	\node (t) [circle, draw, minimum size=.8cm] at (4,0) 	{$t$};
	
	\path (s) edge node {$2$} (u) 
	(u) edge node {$4$} (v) 
	(v) edge node {0} (w) 
	(v) edge node {3} (t) 
	(w) edge node {6} (t) ; 
	\end{tikzpicture}
	\caption{For $b=2$ and $R=11$, a singly sophisticated agent with $\lambda =1/2$ would abandon. } \label{fig:sing-abandons}
\end{figure}
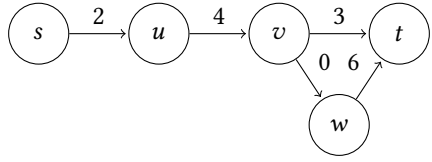

Next, we bound the loss of a singly sophisticated agent in comparison to an optimal agent. We will later show that this bound is close to tight by putting together many instances similar to the one in Figure \ref{fig:sing-abandons} to amplify the sunk cost of a singly sophisticated agent.

\begin{claim} \label{clm:singly-opt}
  The additive gap between the payoffs of an optimal agent and a singly
  sophisticated agent is at most $\frac{(1+\lambda)^{k} - 1}{\lambda} R+R$.
\end{claim}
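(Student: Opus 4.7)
The plan is to partition the singly sophisticated agent's trajectory into segments separated by the nodes at which it changes its plan (the ``switches''), and bound the real cost of each segment by the perceived reward at the start of that segment. Let $C_i$ denote the accumulated sunk cost at the beginning of segment $i$, so that $C_1 = 0$, and let $c_i$ denote the real cost incurred during segment $i$.

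The key step is to show that $c_i \le R + \lambda C_i$. At the start of segment $i$ the agent sits at some node $u$ with perceived reward $R + \lambda C_i$, and commits to a single plan, determined by its $(b,0)$-belief about its future selves, that it will follow until the next switch or the end of the traversal. For the agent to have been willing to begin this plan at all, its perceived cost at $u$ must be at most $R + \lambda C_i$. Because present bias only inflates the cost of the first edge by a factor of $b \ge 1$, the perceived cost at $u$ is an upper bound on the real cost of any prefix of the plan. So whether segment $i$ ends by reaching $t$, by being cut short at the next switch, or by being abandoned at its end, its real cost is at most $R + \lambda C_i$.

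This gives the recurrence $C_{i+1} = C_i + c_i \le (1+\lambda)C_i + R$. Unrolling from $C_1 = 0$ across $k$ segments yields
\[
C_{k+1} \;\le\; R \sum_{j=0}^{k-1} (1+\lambda)^j \;=\; R \cdot \frac{(1+\lambda)^k - 1}{\lambda}.
\]
The optimal agent's payoff is at most $R$, while the singly sophisticated agent's payoff is at least $-C_{k+1}$ (even in the worst case in which it abandons at the end of its last segment). Hence the additive gap is at most $R + C_{k+1} \le R + R \cdot \frac{(1+\lambda)^k - 1}{\lambda}$, matching the bound in the claim.

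The main obstacle is the per-segment cost bound: one must pin down precisely what it means for the singly sophisticated agent to ``commit to a plan'' within a segment, and verify that the perceived-cost constraint imposed at the segment's start really does upper-bound the real cost the agent pays during that segment, regardless of where and why the segment ends. This depends on the definition of a switch as a node at which the agent's plan actually changes, so that between switches the predicted and realized decisions of the agent coincide, and the original willingness-to-traverse inequality from the start of the segment remains valid throughout.
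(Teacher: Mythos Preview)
Your proof is correct and follows essentially the same approach as the paper: partition the trajectory at the switch points, bound the cost of each segment by the perceived reward at its start (since between switches the agent executes a prefix of the sophisticated $(b,0)$-plan it committed to there, whose real cost is dominated by its perceived cost $\le R+\lambda C_i$), and then telescope the resulting recurrence. The paper phrases the per-segment bound as ``between change points the agent behaves like a sophisticated present-biased agent with no sunk-cost bias,'' but this is exactly the observation you spell out more explicitly.
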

\begin{proof}
Recall that a singly sophisticated agent can plan to take one path and then
change its plan. Denote the nodes in which the agent decides to change the path
it takes by $v_1,\ldots,v_k$ and let $s=v_0$. Denote the cost accumulated
between $v_{i-1}$ and $v_i$ by $c_{i-1}$. Note that between every two adjacent
change points $v_{i-1}$ and $v_i$ the agent behaves the same as a present-biased
sophisticated agent with no sunk cost bias, and hence its cost is less than the
perceived reward at node $v_{i}$, which we denote by $R_{i}$. Thus we have
that
$c_i \leq R_{i}$ and $ R_i = R_{i-1}+\lambda c_{i-1}$.
Putting this together we get that $R_i \leq (1+\lambda) R_{i-1}$ which implies
that $R_i \leq (1+\lambda)^{i} \cdot R$. Hence,
\begin{align*}
\sum_{j=1}^k c_j \leq \sum_{j=1}^k (1+\lambda)^{j-1} \cdot R = \frac{(1+\lambda)^{k} - 1}{\lambda}  \cdot R
\end{align*}
This concludes the claim as the payoff of an optimal agent is at most $R$.
\end{proof}
In the next claim we show that this gap is essentially tight:

\begin{claim}
	The additive gap between a singly sophisticated agent with $b>2$ and $\lambda >0$ and an optimal agent can be as high as $\frac{(1+\alpha \cdot \lambda)^{n} - 1}{\lambda} R + R(1-\alpha - \frac{1}{b}) $ where $\alpha = \min \{\frac{1}{2b \lambda}, \frac{b-1}{b^2+2\lambda}  \}$.
\end{claim}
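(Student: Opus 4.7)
The plan is to construct an $n$-gadget chain, each gadget a parametric scaling of the abandonment instance in Figure~\ref{fig:sing-abandons}, so that the singly sophisticated agent is forced to revise its plan in every gadget and accumulate sunk cost that grows geometrically in $n$ with base $(1+\alpha\lambda)$. Gadget $G_i$ has an entry node $s_i$ (with $s_0=s$), an intermediate node $u_i$, a decision node $v_i$, and a branching node $w_i$; both the short edge $(v_i,s_{i+1})$ and the longer path $v_i \to w_i \to s_{i+1}$ lead to the entry of the next gadget (rather than to $t$ as in Figure~\ref{fig:sing-abandons}). Edge costs within $G_i$ are scaled by the entry perceived reward $R_i = R(1+\alpha\lambda)^i$ so that the actual cost traversed through $G_i$ (when the agent ends up taking the longer route) is exactly $\alpha R_i$; this makes $R_{i+1} = R_i(1+\alpha\lambda)$ by the naive sunk-cost update, giving the desired geometric growth.

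At $s_i$, the singly sophisticated agent plans the cheap path through $v_i$: modeled with its current perceived reward $R_i$, its future self at $v_i$ takes the direct edge to $s_{i+1}$ because the alternative through $w_i$ is rendered invalid (the $(w_i,s_{i+1})$ edge would cause abandonment at $w_i$ under reward $R_i$). Upon traversing $(s_i,u_i)$ with cost $a_i$, the naive sunk-cost update raises the perceived reward to $R_i + \lambda a_i$, which is chosen to be just enough to flip the future self's behavior at $v_i$: now the branch through $w_i$ is valid and, by a small $\epsilon$ adjustment, strictly preferred. The current agent re-plans and continues along $u_i \to v_i \to w_i \to s_{i+1}$ because the new perceived cost still fits under the updated perceived reward $R_i + \lambda a_i$. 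At $s_{i+1}$ the agent has added exactly $\alpha R_i$ to its sunk cost, and the same picture repeats.

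The two constraints defining $\alpha$ emerge naturally from this bookkeeping. The condition that the perceived cost of the cheap plan at $s_i$ be at most $R_i$, after charging the present-bias factor on $(s_i,u_i)$ and summing over the remaining planned chain, yields $\alpha \le (b-1)/(b^2+2\lambda)$. The condition that $\lambda a_i$ be just large enough to move $w_i$ from ``abandons under $R_i$'' to ``continues under $R_i + \lambda a_i$'' forces $a_i$ to be a specific fraction of $\alpha R_i$, and in turn produces $\alpha \le 1/(2b\lambda)$. The assumption $b>2$ makes both bounds positive and mutually compatible. Summing geometrically, the accumulated sunk cost after $n$ gadgets is $K_n = \sum_{i=0}^{n-1}\alpha R_i = \frac{(1+\alpha\lambda)^n - 1}{\lambda}\cdot R$.

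Finally, attach a last edge from $s_n$ to $t$ whose cost slightly exceeds $R_n/b$, so that the singly sophisticated agent at $s_n$ finds its present-bias perceived cost of finishing strictly greater than its perceived reward $R_n$ and abandons, yielding payoff $-K_n$. In parallel, include a direct path from $s$ to $t$ (bypassing the chain) of total cost $R(\alpha + 1/b)$; an optimal agent takes this path for payoff $R(1-\alpha-1/b)$. Subtracting yields the additive gap $\frac{(1+\alpha\lambda)^n-1}{\lambda}R + R(1-\alpha-1/b)$, matching the claim. The main obstacle is the simultaneous verification, at every gadget, of three tight inequalities: (i) the agent enters at $s_i$; (ii) the future self at $v_i$ flips its decision at precisely the right reward threshold between $R_i$ and $R_i + \lambda a_i$; and (iii) the agent at $u_i$ continues rather than abandons. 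Fine-tuning the edge costs within each gadget, together with the extremal choice $\alpha = \min\{1/(2b\lambda),(b-1)/(b^2+2\lambda)\}$, is exactly what is needed to make all of these inequalities hold in concert across the entire chain.
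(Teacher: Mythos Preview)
Your construction has a structural gap that breaks the argument. In your chain, the only route to $t$ from any intermediate gadget is through all subsequent gadgets and then the final $(s_n,t)$ edge. But a singly sophisticated agent at $s_i$ (or at $u_i$) is present-bias \emph{sophisticated}: when it plans, it simulates its future selves all the way to $t$ under its current perceived reward. Thus its ``cheap plan'' at $s_i$ has perceived cost
\[
b\,a_i \;+\; \bigl(\text{direct costs in } G_i,\dots,G_{n-1}\bigr) \;+\; c(s_n,t),
\]
and this sum is not local to gadget $i$---it grows with the entire remaining chain. In particular, you set $c(s_n,t)>R_n/b$, and $R_n=R(1+\alpha\lambda)^n$; so for any $n$ with $(1+\alpha\lambda)^n>b$, the final edge alone already exceeds $R_0=R$. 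Under reward $R_0$ the agent's modeled future self at $s_n$ therefore abandons, so from $s$ there is no feasible plan through the chain at all, and the agent never enters it (payoff $0$, not $-K_n$). The same problem recurs at $u_i$: your claim that ``the new perceived cost still fits under $R_i+\lambda a_i$'' cannot hold once the tail of the chain plus $c(s_n,t)$ are included. Consequently your derivation that ``summing over the remaining planned chain yields $\alpha\le (b-1)/(b^2+2\lambda)$'' is not valid; that inequality only arises when the planned continuation has cost $\Theta(R_i)$, not $\Theta(R_n)$.

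The paper's construction avoids exactly this obstruction by giving every stage a \emph{local} route to $t$: from each $v_i$ there is a short side-branch $v_i\to u_i\to t_i\to t$ whose cost is $\Theta(R_i)$. The agent at $v_{i-1}$ therefore plans only one hop ahead ($v_{i-1}\to v_i\to u_i\to t_i\to t$), and the feasibility constraint becomes a clean inequality in $\alpha$ independent of $n$. The ``switch'' is then that, upon arriving at $v_i$ with increased perceived reward $R_i$, the $w_i$ detour becomes live and makes the $u_i$ branch too expensive, pushing the agent on to $v_{i+1}$. To repair your argument you need a similar per-gadget exit to $t$; once you add that, the construction essentially collapses to the paper's.
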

\begin{proof}
		\begin{figure}[ht] \centering 
		\begin{tikzpicture}[->,shorten >=1pt,auto,node distance=2cm,
		thin, scale=0.75]
		\node (s) [circle, draw, minimum size=.8cm] at (-2,0) {$s$}; 
		\node (v1) [circle, draw, minimum size=.8cm] at (0,0) {$v_1$}; 
		\node (u1) [circle, draw, minimum size=.8cm] at (0,-2) {$u_1$}; 
		\node (t1) [circle, draw, minimum size=.8cm] at (0,-4) {$t_1$}; 
		\node (w1) [circle, draw, minimum size=.8cm] at (2,-3) {$w_1$}; 
		\node (v2) [circle, draw, minimum size=.8cm] at (4,0) {$v_2$}; 
		\node (u2) [circle, draw, minimum size=.8cm] at (4,-2) {$u_2$}; 
		\node (t2) [circle, draw, minimum size=.8cm] at (4,-4) {$t_2$}; 
		\node (w2) [circle, draw, minimum size=.8cm] at (6,-3) {$w_2$}; 
		\node (t) [circle, draw, minimum size=.8cm] at (2,-6) 	{$t$};
		\node (vn) [circle, draw, minimum size=.8cm] at (8,0) {$v_n$}; 
		\node (un) [circle, draw, minimum size=.8cm] at (8,-2) {$u_n$}; 
		\node (tn) [circle, draw, minimum size=.8cm] at (8,-4) {$t_n$}; 
		\node (wn) [circle, draw, minimum size=.8cm] at (10,-3) {$w_n$}; 
		
		\path (s) edge node {$x_1$} (v1) 
		(v1) edge node {$x_2$} (v2) 
		(v1) edge node {$y_1$} (u1) 
		(u1) edge node {$z_1$} (t1) 
		(w1) edge node {$b \cdot z_1$} (t1)
		(u1) edge node {$0$} (w1) 
		(v2) edge node {$y_2$} (u2) 
		(u2) edge node {$z_2$} (t2) 
		(w2) edge node {$b \cdot z_2$} (t2)
		(u2) edge node {$0$} (w2) 
		(t1) edge node {$0$} (t) 
		(t2) edge node {$0$} (t) 
		(tn) edge node {$0$} (t) 
		(v2) edge node {$\ldots$} (vn) 
		(vn) edge node {$y_2$} (un) 
		(un) edge node {$z_n$} (tn) 
		(un) edge node {$0$} (wn) 
		(wn) edge node {$b \cdot z_n$} (tn)
		; 
		\end{tikzpicture}
		\caption{A graph in which a singly sophisticated agent incurs an exponential cost } \label{fig:sing-vs-opt}
	\end{figure}
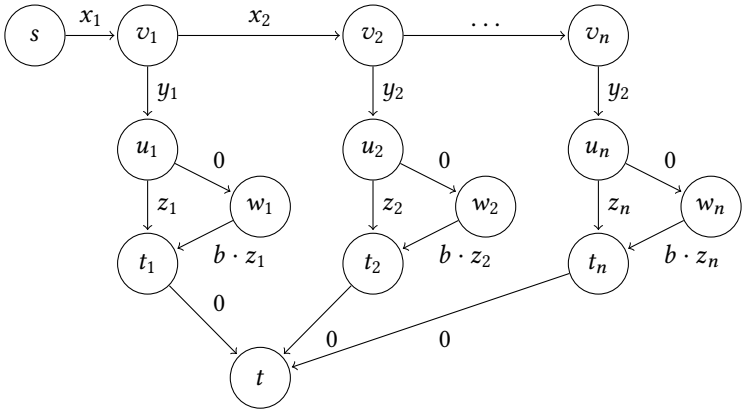
	Consider the instance in Figure \ref{fig:sing-vs-opt}. For simplicity, we define the different costs as a function of the perceived rewards, where as before $R_i$ is the perceived reward at node $v_i$: 
	\begin{align*}
	x_{i} = \alpha \cdot R_{i-1} ; && z_i = \frac{R_{i-1}}{b^2}+\eps; && y_i = \frac{R_i}{b}-\frac{R_{i-1}}{b^2}; &&  R_0 = R
\end{align*}

	We first show that a singly sophisticated agent will take the path $v_1 \rightarrow v_2 \rightarrow \ldots \rightarrow v_n$ and then abandon at $v_n$. 
In particular at every node $v_{i-1}$ (for this purpose $s=v_0$) the agent plans to follow the path $v_{i-1} \rightarrow v_{i} \rightarrow u_{i} \rightarrow t_{i} \rightarrow t$. Once it reaches $v_i$ since the perceived reward has increased, the path  $v_{i} \rightarrow u_{i} \rightarrow t_{i} \rightarrow t$ is no longer an option and the only path it can take to get to $t_i$ is $v_{i} \rightarrow u_{i} \rightarrow w_i \rightarrow t_{i} \rightarrow t$. As the latter path has a perceived cost greater than the perceived reward, the agent plans to follow the path $v_{i} \rightarrow v_{i+1} \rightarrow u_{i+1} \rightarrow t_{i+1} \rightarrow t$ instead. 

\begin{lemma} 
	A $(b,\lambda)$-singly sophisticated agent traversing the graph in Figure \ref{fig:sing-vs-opt} will follow the path $v_1 \rightarrow v_2 \rightarrow \ldots \rightarrow v_n$ and then abandon at $v_n$.
\end{lemma}
\begin{proof}
We show that for every $i$ once the agent reaches $v_i$ because the perceived reward has increased the path  $v_{i} \rightarrow u_{i} \rightarrow t_{i} \rightarrow t$ is no longer an option and the only path it can take to get to $t_i$ is $v_{i} \rightarrow u_{i} \rightarrow w_i \rightarrow t_{i} \rightarrow t$ which for an agent at $v_i$ has a perceived cost greater than the perceived reward. Note that when the agent is standing at $v_{i-1}$ its perceived reward is $R_{i-1}$ and that $R_i = R_{i-1} + \lambda \alpha R_{i-1} = (1+ \alpha\lambda)R_{i-1}$ :	
	\begin{itemize}
		\item For a reward of $R_{i-1}$ an agent at $u_{i}$ will continue to $t_{i}$ - first we observe that the perceived cost of continuing straight to $t_i$ is less than $R_{i-1}$: $b \cdot z_i = \frac{R_{i-1}}{b}+\eps < R_{i-1}$. Second, note that the agent at $v_{i-1}$ believes that the path $u_i \ra w_i \ra t_i$ is not an option since $b^2\cdot z_i = R_{i-1}+\eps \cdot b^2 > R_{i-1}$.
		\item For a reward of $R_{i-1}$ an agent at $v_{i}$ will continue to $u_{i}$ - the agent believes that if it will continue to $u_i$ it will then continue to $t_i$ and then to $t$. The perceived cost of this path is 
		\begin{align*}
			b\cdot y_i +z_i &=  b \cdot (\frac{R_i}{b}-\frac{R_{i-1}}{b^2}) + \frac{R_{i-1}}{b^2}+\eps \\
			&=  R_i-\frac{R_{i-1}}{b} + \frac{R_{i-1}}{b^2}+\eps \\
			&= R_{i-1} + \alpha\lambda R_{i-1}-\frac{R_{i-1}}{b} + \frac{R_{i-1}}{b^2}+\eps \\
			& \leq R_{i-1} + \lambda \cdot R_{i-1} \cdot \frac{1}{2b \lambda} -\frac{R_{i-1}}{b} + \frac{R_{i-1}}{b^2}+\eps \\
			&= R_{i-1} \cdot  \frac{2b^2-b+2}{2b^2} + \eps
		\end{align*}
		For $b>2$ and an appropriate value of $\eps$ the above perceived cost is less than $R_{i-1}$. By construction it is easy to see that the perceived cost of any other path (i.e., continuing from $v_i$ to $v_{i+1} $)  is greater.
		\item For a reward of $R_{i-1}$ an agent at $v_{i-1}$ will continue to $v_{i}$:
		\begin{itemize}
			\item The perceived cost of the path $v_{i-1} \rightarrow v_{i} \rightarrow u_{i} \rightarrow t_{i} \rightarrow t$ is less than $R_{i-1}$:
			\begin{align*}
				b \cdot x_i + y_i + z_i  &= b \cdot R_{i-1} \cdot x_i + \frac{R_{i}}{b} +\eps \\ 
				&= b \cdot x_i + \frac{R_{i-1} + \lambda x_i}{b} +\eps \\ 
				&= \frac{R_{i-1}}{b}  + x_i \frac{b^2+\lambda}{b} + \eps \\
				&= \frac{R_{i-1}}{b}  +\alpha R_{i-1} \cdot \frac{b^2+\lambda}{b} + \eps \\
				& \leq  \frac{R_{i-1}}{b} +  \frac{b-1}{b^2+2\lambda} \cdot \frac{b^2+\lambda}{b} \cdot R_{i-1} + \eps  
			\end{align*}
			The last expression is less than $R_{i-1}$ for an appropriate choice of $\eps$.
			\item For a reward of $R_{i-1}$ an agent at $u_{i-1}$ will continue to $w_{i-1}$. For this it suffices to show that $b^2 z_{i-1} = R_{i-2} + \eps  < R_{i-1}$.
			\item For a reward of $R_{i-1}$ and an agent at $v_{i-1}$ the perceived cost of the path
			$v_{i-1} \to u_{i-1} \to w_{i-1} \to t_{i-1} \to t$ is greater than
			$R_{i-1}$:
			\begin{align*}
				b \cdot y_{i-1} + b \cdot z_{i-1} = b \cdot \frac{R_i}{b} + b \cdot \eps = R_i + b\eps> R_{i-1}.
			\end{align*}
		\end{itemize}
	\end{itemize}
\end{proof}

To compute the total cost the agent incurred, recall that $R_i = R_{i-1} + \lambda \alpha R_{i-1}$ this implies that $(1+\alpha \lambda)^{i} \cdot R$. Note that 
\begin{align*}
\sum_{i=1}^n x_i  = \frac{R_n-R}{\lambda} = \frac{(1+\alpha \lambda)^{n} -1}{ \lambda} \cdot R
\end{align*}
Lastly, observe that an optimal agent will take the path $s \ra v_1 \ra u_1 \ra t_1 \ra t$ for a payoff of $R(1-\alpha-\frac{1}{b})$.
\end{proof}

To understand the role naivete regarding sunk cost plays in agents that are sophisticated about their present bias we now compare between the payoff of singly sophisticated agents and doubly sophisticated agents. By Proposition \ref{prop:double_cost_ratio} we have that the payoff of a doubly sophisticated agent is at most an additive amount of $(b-1)C_o$ from the payoff of an optimal agent. Hence by Claim \ref{clm:singly-opt} we have that a singly sophisticated  can do worse than a doubly sophisticated agent by an exponential additive factor. However, in some cases being naive about its sunk cost can actually help the agent avoid taking a more costly path or reach the target. As the payoff of a doubly sophisticated agent is at most an additive factor of $(b-1)C_o$ of from the payoff of an optimal agent, even in cases in which a singly sophisticated agent surpasses a doubly sophisticated one, the payoff of the singly sophisticated agent is greater by at most $(b-1)C_o$. The example in Figure \ref{fig:sing-better} in which a singly sophisticated agent will behave just as a present-biased sophisticated agent establishes this is tight. This proves the following claim:

\begin{claim}
	The payoff of a singly sophisticated agent can be better than the payoff of a doubly sophisticated agent by an additive amount of $(b-1) C_{o}$.
\end{claim}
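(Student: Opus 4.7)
The plan is to prove the claim in two parts: an upper bound, and a matching construction. The upper bound is essentially a corollary of results already established. By Proposition \ref{prop:double_cost_ratio}, the payoff of an optimal agent exceeds that of a doubly sophisticated agent by at most $(b-1)C_o$. Since the payoff of a singly sophisticated agent is at most the payoff of an optimal agent (no agent can beat the optimum), subtracting gives that the payoff of a singly sophisticated agent exceeds the payoff of a doubly sophisticated agent by at most $(b-1)C_o$.

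For tightness, I would reuse the instance in Figure \ref{fig:sing-better} with reward $R = b^2 - \lambda\eps$. The earlier analysis in Section \ref{section:ds-spb-comp} already shows that on this instance the doubly sophisticated agent takes the upper path (incurring cost $b+\eps$), while a sophisticated present-biased agent takes the lower path (incurring cost $1 + (b+1)\eps$). The central new step is to argue that the singly sophisticated agent behaves on this instance exactly like a sophisticated present-biased agent. At $s$ it has zero sunk cost, so its own perceived reward equals $R$; and since it is naive about sunk-cost bias, it models every future self as a $(b,0)$-agent that is sophisticated about present bias. Hence its initial planning at $s$ is indistinguishable from that of a sophisticated present-biased agent, and it picks the lower path. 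I would then verify that it actually follows through: after traversing to $v_2$ it has sunk cost $1$, so its perceived reward rises to $R + \lambda$, while the perceived cost of the remaining edge is only $b(b+1)\eps$, which for small $\eps$ is far below the perceived reward. So the agent reaches $t$ along the lower path.

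A short arithmetic check then completes the tightness argument: the singly sophisticated payoff is $R - 1 - (b+1)\eps$, the doubly sophisticated payoff is $R - b - \eps$, and their difference is $(b-1) - b\eps$. Since $C_o = 1 + (b+1)\eps \to 1$ as $\eps \to 0$, this difference approaches $(b-1)C_o$ as claimed. The only mildly delicate step is confirming that the singly sophisticated agent does not deviate from the lower-path plan once it acquires sunk cost at $v_2$; all other steps are either direct consequences of Proposition \ref{prop:double_cost_ratio} or restatements of the analysis already performed in the earlier claims about Figure \ref{fig:sing-better}.
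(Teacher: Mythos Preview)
Your proposal is correct and follows essentially the same approach as the paper: the upper bound via Proposition~\ref{prop:double_cost_ratio} plus the observation that no agent beats optimal, and tightness via the instance in Figure~\ref{fig:sing-better}, noting that the singly sophisticated agent behaves there exactly like a sophisticated present-biased agent. Your additional verification that the agent does not deviate at $v_2$ once it has acquired sunk cost is a sensible detail the paper leaves implicit.
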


\section{Conclusion}
We have studied the interaction between two behavioral biases that
both play an important role in planning contexts: 
present bias and sunk-cost bias.
We find that in conjunction, they give rise to natural behavioral phenomena 
that cannot be seen with either in isolation. Through these biases, we also
gain new insights about subtleties in the behavior of naive and
sophisticated agents. We show that sophistication about these two
biases makes path-planning computationally hard, though we are still
able to provide performance bounds for agents with different forms of
sophistication.

This work leads to several open questions. While we showed that path-planning
for doubly sophisticated agents is NP-hard, we do not know whether or not the
problem is in NP. 
Moreover, in the case where the reward $R$ exceeds $b$ times the optimal 
cost --- which implies that a feasible
path for the doubly sophisticated agent is guaranteed to exist --- 
is it possible to find such a path efficiently? 
We also showed that there exist instances in
which singly sophisticated agents incur exponentially large cost; 
is there a structural characterization (e.g. a graph minor result) 
for the graphs on which singly sophisticated agents incur exponential cost,
in the style of \cite{ko-ec14-full}?
More broadly, the rich interplay between these two biases
demonstrates how considering multiple behavioral biases together 
can yield a wider set of natural phenomena.
With this in mind, we believe there are many further opportunities to
enhance theoretical models of behavior through the analysis of
agents with multiple biases.

\bibliographystyle{ACM-Reference-Format}
\bibliography{header,refs}

\newpage
\begin{appendix}
	\section{Recursive Algorithm for Doubly Sophisticated Path Computation} \label{app:rec-alg}
Let $k$ be an upper bound on the number of possible different values of
  the costs for reaching a node.  \aref{intdsrec} recursively computes the path
that a doubly sophisticated agent will take. Note that as the number of
subproblems the algorithm need to solve is at most $k \cdot n$, the algorithm
runs in time polynomial in $n$ and $k$. Hence, for example, if the number of
paths connecting $s$ and $t$ is polynomial then the path computation problem can
be solved in polynomial time. 
\alg{\textsc{RecursiveIntegerDoublySophisticated}$(G, R, b, \lambda)$}{intdsrec}
\State Initialize $choices$ and $costs$ to empty hashmaps
\Procedure{ComputePathAndCosts}{$u, i$}
\If{$u == t$}
\State $costs[u, i] = 0$
\State \Return
\EndIf
\For{$v \in N(u)$}
\If{$choices[v, i + c(u,v)]$ is empty}
\State \Call{ComputePathAndCosts}{$v, i + c(u,v)$}
\EndIf
\EndFor
\State $v \gets \argmin_{v' \in N(u)} b \cdot c(u,v') + costs[v', i+c(u,v')]$
\State $perceived \gets b \cdot c(u,v) + costs[v, i+c(u,v)]$
\If {$perceived > R + \lambda \cdot i$}
\State $choices[u, i] \gets null$
\State $costs[u, i] \gets \infty$
\Else
\State $choices[u, i] \gets v$
\State $costs[u, i] \gets c(u,v) + costs[v, i+c(u,v)]$
\EndIf
\EndProcedure
\State \Call{ComputePathAndCosts}{$s, 0$}
\ealg
  \section{Supplementary Material for Section \ref{SEC:DOUBLY-NAIVE}}
  \label{app:doubly-naive}
	
	\prevproof{Claim}{clm:dn-lower-bound}
	{
Consider Figure \ref{fig:doubly_naive_exp} with costs given by
\begin{align*}
  x_i &= y_0 \frac{b(b-1)}{b^2 + \lambda} \p{\frac{b(b+\lambda)}{b^2 +
  \lambda}}^{i-1} \\
  y_i &= y_0 \p{\frac{b(b+\lambda)}{b^2 + \lambda}}^i \\
  R &= b y_0
\end{align*}
Note that
\begin{align*}
  \sum_{j=1}^i x_j &= y_0 \frac{b(b-1)}{b^2 + \lambda} \sum_{j=1}^i
  \p{\frac{b(b+\lambda)}{b^2+\lambda}}^{j-1} \\
  &= y_0 \frac{b(b-1)}{b^2 + \lambda} \p{\frac{\p{\frac{b(b+\lambda)}{b^2 +
  \lambda}}^i - 1}{\frac{b(b+\lambda)}{b^2+\lambda} - 1}} \\
  &= y_0 \frac{b(b-1)}{b^2 + \lambda} \p{\frac{\p{\frac{b(b+\lambda)}{b^2 +
  \lambda}}^i - 1}{\frac{\lambda(b-1)}{b^2+\lambda}}} \\
  &= \frac{by_0}{\lambda} \p{\p{\frac{b(b+\lambda)}{b^2+\lambda}}^i - 1}
  \numberthis \label{eq:sumxj}
\end{align*}

At node $v_i$, the agent has incurred a cost of $\sum_{j=1}^i x_i$. It must
choose between going directly to $t$, for a perceived cost of $b y_i$, or going
to $v_{i+1}$ and then to $t$, for a perceived cost of $b x_{i+1} + y_{i+1}$.
However, by construction,
\begin{align*}
  b x_{i+1} + y_{i+1} &= y_0 \p{\frac{b^2(b-1)}{b^2 + \lambda}
  \p{\frac{b(b+\lambda)}{b^2 + \lambda}}^i + \p{\frac{b(b+\lambda)}{b^2 +
  \lambda}}^{i+1}} \\
  &= y_0 \p{\frac{b^2(b-1)}{b^2 + \lambda}
  \p{\frac{b(b+\lambda)}{b^2 + \lambda}}^i + \frac{b(b+\lambda)}{b^2 +
  \lambda}\p{\frac{b(b+\lambda)}{b^2 + \lambda}}^i} \\
  &= y_0 \p{\frac{b^2(b-1) + b(b+\lambda)}{b^2 + \lambda}
  \p{\frac{b(b+\lambda)}{b^2 + \lambda}}^i} \\
  &= by_i \p{\frac{b^2 - b + b + \lambda}{b^2 + \lambda}} \\
  &= by_i
\end{align*}
Breaking ties by continuing to $v_{i+1}$, the agent will always prefer to
continue along the fan. The reward is always large enough to do so, because the
perceived reward at $v_i$ is 
\begin{align*}
  R + \lambda\sum_{j=1}^i x_j &= b y_0 + b y_0 \p{\p{\frac{b(b+\lambda)}{b^2 +
  \lambda}}^i - 1} \\
  &= b y_0 \p{\frac{b(b+\lambda)}{b^2 + \lambda}}^i \\
  &= by_i \\
  &= b x_{i+1} + y_{i+1}
\end{align*}
Thus, the total cost incurred by the agent is
\begin{align*}
  \sum_{i=1}^n x_i + y_n &= y_0 \b{\frac{b}{\lambda}
  \p{\p{\frac{b(b+\lambda)}{b^2 + \lambda}}^n - 1} + \p{\frac{b(b+\lambda)}{b^2
  + \lambda}}^n} \\
  &= \frac{y_0}{\lambda} \b{(b+\lambda) \p{\frac{b(b+\lambda)}{b^2
  + \lambda}}^n - b} \\
  &= \Theta\p{R\p{\frac{b(b+\lambda)}{b^2 + \lambda}}^n} \\
  &= \Theta\p{R\p{1 + \frac{(b-1)\lambda)}{b^2 + \lambda}}^n}
\end{align*}
Note that for this family of examples, a naive present-biased agent with only
present bias (or with $\lambda = 0$) would go from $s$ to $v_1$ and then immediately
abandon because the perceived cost would be higher than the reward.
}

\end{appendix}

\end{document}